\title{Nested Words for Order-2 Pushdown Systems\footnote{Partially supported 
by LIA InForMeL}}
\author[1]{C.~Aiswarya}
\author[2]{Paul Gastin}
\author[3]{Prakash Saivasan}
\affil[1]{Chennai Mathematical Institute\\\texttt{aiswarya@cmi.ac.in}}
\affil[2]{LSV, ENS-Cachan, CNRS, INRIA, Univ.\ Paris-Saclay, 94235 Cachan, France\\
          \texttt{gastin@lsv.fr}}
\affil[3]{University of Kaiserslautern\\\texttt{saivasan@rhrk.uni-kl.de}}
\authorrunning{C.\ Aiswarya,  P.\ Gastin, and P.\ Saivasan}
\subjclass{F.1.1 [Computation by Abstract Devices]: Models of Computation;
F.3.1 [Logics and Meanings of Programs]: Specifying and Verifying and Reasoning about Programs}
\keywords{ Higher-order pushdown systems, Nested words, Model checking, split-width}
\theoremstyle{definition}
\renewcommand{\paragraph}{\@startsection{paragraph}{6}{\z@}{2ex}{-0.7em}{\normalsize\bf}}
\renewcommand\subparagraph[1]{\medskip\noindent\textbf{#1}~}
\begin{document}

\maketitle

\begin{abstract}
  We study linear time model checking of collapsible higher-order pushdown
  systems (CPDS) of order 2 (manipulating stack of stacks) against MSO and PDL
  (propositional dynamic logic with converse and loop) enhanced with push/pop
  matching relations.  To capture these linear time behaviours with matchings,
  we propose order-2 nested words.  These graphs consist of a word structure
  augmented with two binary matching relations, one for each order of stack,
  which relate a push with matching pops (or collapse) on the respective stack.
  Due to the matching relations, satisfiability and model checking are
  undecidable.  Hence we propose an under-approximation, bounding the number of
  times an order-1 push can be popped.  With this under-approximation, which 
  still allows unbounded stack height,  we get
  decidability for satisfiability and model checking of both MSO and PDL.
  The problems are ExpTime-Complete for
  PDL.
\end{abstract}


\newcommand{\newtodo}[1]{\todo[linecolor=orange,backgroundcolor=orange!20,bordercolor=orange]{#1}}
\newcommand{\comment}[1] {\color{red}{(** #1 **)}}

\newcommand{\cob}{{\lfloor}}
\newcommand{\ccb}{{\rfloor}}

\newcommand{\Eve}{Eve\xspace}
\newcommand{\Adam}{Adam\xspace}
\newcommand{\snw}{\overline{\nw} }
\newcommand{\snwof}{\textsf{sNWof}}

\newcommand{\Conf} {\mbox{Conf}}
\newcommand{\moves}[1] {\stackrel{#1}{\longrightarrow}}
\newcommand{\move}[1] {\stackrel{#1}{\rightarrow}}

\newcommand{\States} {State}
\newcommand{\Stack} {Stack}
\newcommand{\Trace} {Trace}
\newcommand{\Natural} {\mathbb{N}}
\newcommand{\NestArrow} {\mbox{\curvearrowleft}}

\newcommand{\Source} {Src}
\newcommand{\StackSeq} {\nu}

\newcommand{\Nestify}{\texttt{Nestify}\xspace}

\newcommand{\mypush}{\mathsf{push}_1}
\newcommand{\myPush}{\mathsf{Push}_2}

\newcommand{\grid}{\mathsf{grid}}
\newcommand{\last}{\mathsf{empty}_1}
\newcommand{\ispushone}{\mathsf{ispush}_1}
\newcommand{\ispopone}{\mathsf{ispop}_1}
\newcommand{\ispushtwo}{\mathsf{ispush}_2}
\newcommand{\ispoptwo}{\mathsf{ispop}_2}
\newcommand{\isfirstpushtwo}{\mathsf{isfirstpush}_2}

\newcommand{\SAT}{\mathsf{SAT}}
\newcommand{\MC}{\mathsf{MC}}

\newcommand{\StateSet}{Q}
\newcommand{\StackStateSet}{S}
\newcommand{\StackStateSetBot}{\StackStateSet_\bot}
\newcommand{\TransitionSet}{\Delta}
\newcommand{\InitialState}{q_0}
\newcommand{\FinalStates}{F}
\newcommand{\InitialStackState}{s_0}
\newcommand{\LabelSet}{S}
\newcommand{\Label}{s}
\newcommand{\State}{q}
\newcommand{\state}{\State}
\newcommand{\StackState}{s}
\newcommand{\Alphabet}{\Sigma}
\newcommand{\twocpds}{\textsf{2-CPDS}\xspace}
\newcommand{\hopds}{\mathcal{H}}
\newcommand{\push}{{\color{blue}{\uparrow_1}}}
\newcommand{\Push}{{\color{red}{\uparrow_2}}}
\newcommand{\pushs}[1]{{\color{blue}{\uparrow_1^{#1}}}}
\newcommand{\Pushs}[1]{{\color{red}{\uparrow_2^{#1}}}}
\newcommand{\pop}{{\color{blue}{\downarrow_1}}}
\newcommand{\Pop}{{\color{red}{\downarrow_2}}}
\newcommand{\Pops}[1]{{\color{red}{\downarrow_2^#1}}}
\newcommand{\collapse}{{\color{red}{\Downarrow}}}
\newcommand{\collapses}[1]{{\color{red}{\Downarrow^{#1}}}}
\newcommand{\toptest}{\mathsf{top}}
\newcommand{\OpSet}{\mathsf{Op}}
\newcommand{\op}{\mathsf{op}}
\newcommand{\nop}{\mathsf{Nop}}
\newcommand{\trans}{\tau}
\newcommand{\conf}{C}
\newcommand{\ConfSet}{\mathcal{C}}
\newcommand{\goesto}[1]{\xRightarrow{#1}}
\newcommand{\run}{\rho}
\newcommand{\topone}{\mathsf{top}_1}
\newcommand{\toptwo}{\mathsf{Top}_2}
\newcommand{\NestArr } {\curvearrowleft }
\newcommand{\NWModels} {\mathcal{N}}
\newcommand{\domain} {\mathsf{dom}}
\newcommand{\PushA}[1] { #1^{\downarrow}}
\newcommand{\PopA}[1] { #1^{\uparrow}}
\newcommand{\interpretation} {\mathcal{I}}

\newcommand{\NestRelOne}{\curvearrowright^1}
\newcommand{\NestRelTwo}{\mathrel{\red\curvearrowright^2}}
\newcommand{\tuple}[1]{\langle #1 \rangle}
\newcommand{\nw}{\mathcal{N}}
\newcommand{\NWof}{\textsf{NWof}}
\newcommand{\NWtwo}{\textsf{2-NW}\xspace}

\newcommand{\Lang}{\mathcal{L}}
\newcommand{\sizeof}[1]{|#1|} 
\newcommand{\MSONW}{\ensuremath{\mathsf{MSO}}\xspace}

\newcommand{\godowngrid}{\downarrow}
\newcommand{\goupgrid}{\uparrow}
\newcommand{\goleftgrid}{\leftarrow}
\newcommand{\gorightgrid}{\rightarrow}
\newcommand{\nextpop}{\hookrightarrow}
\newcommand{\PDL}{\ensuremath{\mathsf{LCPDL}}\xspace}
\newcommand{\existspath}[1]{\langle #1 \rangle}
\newcommand{\allpaths}[1]{[#1]}
\newcommand{\existsloop}[1]{\mathsf{Loop}(#1)}
\newcommand{\conc}{\cdot}
\newcommand{\gonext}{\mathop{\rightarrow}}
\newcommand{\goprev}{\leftarrow}
\newcommand{\gopopOne}{\NestRelOne}
\newcommand{\gopushOne}{\curvearrowleft^1}

\newcommand{\gopopTwo}{\NestRelTwo}
\newcommand{\gopushTwo}{\mathrel{\red\curvearrowleft^2}}

\newcommand{\gonextpop}{\mathop{\nextpop}}
\newcommand{\goprevpop}{\mathop{\hookleftarrow}}
\newcommand{\test}[1]{\{#1\}?}
\newcommand{\existsnode}[1]{\mathsf{E} #1}
\newcommand{\forallnodes}[1]{\mathsf{A} #1}
\newcommand{\true}{\mathsf{true}}
\newcommand{\false}{\mathsf{false}}
\newcommand{\Us}{\mathop{\mathsf{U}^\text{s}}}

\newcommand{\twohopds}{\textsf{2-HOPDS}\xspace}
\newcommand{\twonw}{\textsf{2-NW}\xspace}
\newcommand{\twonws}{\textsf{2-NWs}\xspace}
\newcommand{\bound}{\beta}
\newcommand{\pbnw}{\twonw(\bound)}
\newcommand{\poly}{\mathsf{Poly}}

\newcommand{\width}{\mathsf{width}}
\newcommand{\splitwidth}{\mathsf{sw}}

\newcommand{\Aswk}{\ensuremath{\mathcal{A}^{k\mathsf{-sw}}_{\twonw}}\xspace}
\newcommand{\Apopb}{\ensuremath{\mathcal{A}^{\bound}_{\twonw}}\xspace}
\newcommand{\Ahopds}{\ensuremath{\mathcal{A}^{\bound}_{\hopds}}\xspace}
\newcommand{\Aphi}{\ensuremath{\mathcal{A}^{\bound}_{\phi}}\xspace}
\newcommand{\Anotphi}{\ensuremath{\mathcal{A}^{\bound}_{\neg\phi}}\xspace}
\newcommand{\Aword}{\ensuremath{\mathcal{A}^{\bound}_{\mathsf{word}}}\xspace}
\newcommand{\Anesting}{\ensuremath{\mathcal{A}^{\bound}_{\mathsf{nw}}}\xspace}
\newcommand{\STT}{$\mathsf{STT}$\xspace}
\newcommand{\STTs}{$\mathsf{STTs}$\xspace}
\newcommand{\kSTT}{$k$-\STT\xspace}
\newcommand{\kSTTs}{$k$-\STTs\xspace}
\newcommand{\STW}{$\mathsf{STW}$\xspace}
\newcommand{\STWs}{$\mathsf{STWs}$\xspace}
\newcommand{\stt}{\tau}
\newcommand{\add}[3]{\mathop{\mathsf{Add}_{#1,#2}^{#3}}}
\newcommand{\forget}[1]{\mathop{\mathsf{Forget}_#1}}
\newcommand{\rename}[2]{\mathop{\mathsf{Rename}_{#1,#2}}}
\newcommand{\sttunion}{\oplus}
\newcommand{\Nat}{\mathbb{N}}
\newcommand{\sem}[1]{\llbracket #1 \rrbracket}
\newcommand{\dom}{\mathsf{dom}}


\pgfdeclarelayer{bg}    
\pgfsetlayers{bg,main}

\newcommand{\tikzclose}{\includegraphics[scale=1,page=5]{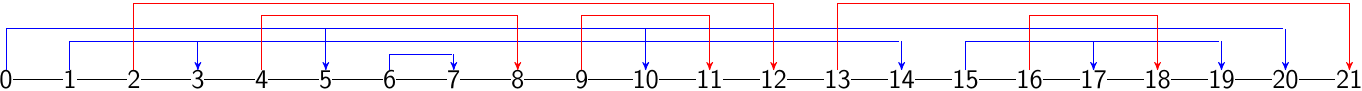}}
\newcommand{\varcolor}{black}

\begin{gpicture}[ignore, name = gpic:nested-tree ]
  \gasset{Nw=3,Nh=3,Nmr= 4,AHnb=0,Nframe=n}
  \unitlength =2.5
  \node(n1)(0,0) {$1$}
  \node(n2)(-5,-5) {$2$}
  \node(n3)(-10,-10) {$3$}
    \node(n4)(5,-5) {$4$}
  \node(n5)(0,-10) {$5$}
  \node(n6)(-5,-15) {$6$}
  \node(n7)(10,-10) {$7$}
  \node(n8)(15,-15) {$8$}

  \drawedge (n1,n2){}
  \drawedge (n2,n3){}
  \drawedge (n1,n4){}
  \drawedge (n4,n5){}
  \drawedge (n5,n6){}
  \drawedge (n4,n7){}
  \drawedge (n7,n8){}
  
  \gasset{curvedepth=-3, AHnb=1, linecolor=blue}
  \drawedge (n2,n3){}
  \drawedge (n4,n6){}
  \gasset{curvedepth=3}
  \drawedge(n4,n7){}
  \gasset{curvedepth=5}
  \drawedge(n1,n8){}
\end{gpicture}
  
\begin{gpicture}[ignore, name =gpic:tree-to-nw]
  \gasset{Nw=2.5,Nh=2.5,Nframe=n,AHnb=1,AHLength=2,AHlength=1.5}
  \unitlength=.704705882353mm
  \node(n0)(0,0){\normalsize 1}
  \node(n1)(10,0){\normalsize $\bullet$}
  \node(n2)(20,0){\normalsize 2}
  \node(n3)(30,0){\normalsize $\bullet$}
  \node(n4)(40,0){\normalsize 3}
  \node(n5)(50,0){\normalsize $\bullet$}
  \node(n6)(60,0){\normalsize $\bullet$}
  \node(n7)(70,0){\normalsize 4}
  \node(n8)(80,0){\normalsize $\bullet$}
  \node(n9)(90,0){\normalsize 5}
  \node(n10)(100,0){\normalsize $\bullet$}
  \node(n11)(110,0){\normalsize 6}
  \node(n12)(120,0){\normalsize $\bullet$}
  \node(n13)(130,0){\normalsize $\bullet$}
  \node(n14)(140,0){\normalsize 7}
  \node(n15)(150,0){\normalsize 8}
  
   \node(e0)(0,-5){\normalsize $\push$}
  \node(e1)(10,-5){\normalsize $\Push$}
  \node(e2)(20,-5){\normalsize $\push$}
  \node(e3)(30,-5){\normalsize $\Push$}
  \node(e4)(40,-5){\normalsize $\pop$}
  \node(e5)(50,-5){\normalsize $\Pop$}
  \node(e6)(60,-5){\normalsize $\Pop$}
  \node(e7)(70,-5){\normalsize $\push$}
  \node(e8)(80,-5){\normalsize $\Push$}
  \node(e9)(90,-5){\normalsize }
  \node(e10)(100,-5){\normalsize $\Push$}
  \node(e11)(110,-5){\normalsize $\pop$}
  \node(e12)(120,-5){\normalsize $\Pop$}
  \node(e13)(130,-5){\normalsize $\Pop$}
  \node(e14)(140,-5){\normalsize $\pop$}
  \node(e15)(150,-5){\normalsize $\pop$}

  \drawedge[dash={7.875 2.125}0,AHnb=0](n0,n15){}

  \node[Nw=0.1,Nh=0.1](1)(20,8.4){}
  \node[Nw=0.1,Nh=0.1](2)(40,8.4){}
  \drawedge[AHnb=0,linecolor=blue](n2,1){}
  \drawedge[AHnb=0,linecolor=blue](1,2){}
  \drawedge[AHnb=1,linecolor=blue](2,n4){}
  \node[Nw=0.1,Nh=0.1](1)(30,6.4){}
  \node[Nw=0.1,Nh=0.1](2)(50,6.4){}
  \drawedge[AHnb=0,linecolor=red](n3,1){}
  \drawedge[AHnb=0,linecolor=red](1,2){}
  \drawedge[AHnb=1,linecolor=red](2,n5){}
  \node[Nw=0.1,Nh=0.1](1)(10,10.4){}
  \node[Nw=0.1,Nh=0.1](2)(60,10.4){}
  \drawedge[AHnb=0,linecolor=red](n1,1){}
  \drawedge[AHnb=0,linecolor=red](1,2){}
  \drawedge[AHnb=1,linecolor=red](2,n6){}
  \node[Nw=0.1,Nh=0.1](2)(110,10.4){}
  \drawedge[AHnb=1,linecolor=blue](2,n11){}
  \node[Nw=0.1,Nh=0.1](1)(100,6.4){}
  \node[Nw=0.1,Nh=0.1](2)(120,6.4){}
  \drawedge[AHnb=0,linecolor=red](n10,1){}
  \drawedge[AHnb=0,linecolor=red](1,2){}
  \drawedge[AHnb=1,linecolor=red](2,n12){}
  \node[Nw=0.1,Nh=0.1](1)(80,8.4){}
  \node[Nw=0.1,Nh=0.1](2)(130,8.4){}
  \drawedge[AHnb=0,linecolor=red](n8,1){}
  \drawedge[AHnb=0,linecolor=red](1,2){}
  \drawedge[AHnb=1,linecolor=red](2,n13){}
  \node[Nw=0.1,Nh=0.1](1)(70,10.4){}
  \node[Nw=0.1,Nh=0.1](2)(140,10.4){}
  \drawedge[AHnb=0,linecolor=blue](n7,1){}
  \drawedge[AHnb=0,linecolor=blue](1,2){}
  \drawedge[AHnb=1,linecolor=blue](2,n14){}
  \node[Nw=0.1,Nh=0.1](1)(0,12.4){}
  \node[Nw=0.1,Nh=0.1](2)(150,12.4){}
  \drawedge[AHnb=0,linecolor=blue](n0,1){}
  \drawedge[AHnb=0,linecolor=blue](1,2){}
  \drawedge[AHnb=1,linecolor=blue](2,n15){}
\end{gpicture}  

\begin{gpicture}[ignore, name=gpic:twocpds]
  \gasset{Nw=5,Nh=5,Nmr= 4,AHnb=1}
  
  \node[Nmarks=i,iangle=90](n0)(0,0) {$0$}
  \node(n1)(24,8) {$1$}
  \node(n2)(24,-8) {$2$}
  
  \drawedge[curvedepth=1.5,ELpos=40](n1,n0){$\pushs a$}
  \drawedge[curvedepth=1.5](n0,n1){$\Push$}
  \drawedge (n1,n2){$\pushs b$}
  \drawedge (n2,n0){$\collapse$}
  \drawloop[loopdiam=4,loopangle=270,ELpos=75](n2) {$\pop$}
  
\end{gpicture}

\begin{gpicture}[ ignore,name =gpic:twocpdsNoCollapse]
  \gasset{Nw=5,Nh=5,Nmr= 4,AHnb=1}
  
  \node[Nmarks=if,fangle=90, iangle=-90](n0)(0,0) {$0$}
  \node(n1)(24,8) {$1$}
  \node(n2)(24,-8) {$2$}
  
  \drawedge[curvedepth](n0,n1){$\pushs a$}
  \drawedge (n1,n2){$\Push$}
  \drawedge (n2,n0){$\toptest(\bot)?$;$\Pop$}
  \drawloop[loopdiam=4,loopangle=-90](n2) {$\toptest(a)?$;$\pop$}
  
\end{gpicture}

\begin{gpicture}[ignore,name=gpic:nw2]
  \gasset{Nw=4,Nh=4,Nframe=n,AHnb=1,AHLength=1.5,AHlength=1}
  \unitlength=0.63333333333333mm
  \node(n0)(0,0){\small 0}
  \node(n1)(10,0){\small 1}
  \node(n2)(20,0){\small 2}
  \node(n3)(30,0){\small 3}
  \node(n4)(40,0){\small 4}
  \node(n5)(50,0){\small 5}
  \node(n6)(60,0){\small 6}
  \node(n7)(70,0){\small 7}
  \node(n8)(80,0){\small 8}
  \node(n9)(90,0){\small 9}
  \node(n10)(100,0){\small 10}
  \node(n11)(110,0){\small 11}
  \node(n12)(120,0){\small 12}
  \node(n13)(130,0){\small 13}
  \node(n14)(140,0){\small 14}
  \node(n15)(150,0){\small 15}
  \node(n16)(160,0){\small 16}
  \node(n17)(170,0){\small 17}
  \node(n18)(180,0){\small 18}
  \node(n19)(190,0){\small 19}
  \node(n20)(200,0){\small 20}
  \node(n21)(210,0){\small 21}
  
  \node(e0)(0,-6){\small $\pushs a$}
  \node(e1)(10,-6){\small $\Push$}
  \node(e2)(20,-6){\small $\pop$}
  \node(e3)(30,-6){\small $\Pop$}
  \node(e4)(40,-6){\small $\pushs a$}
  \node(e5)(50,-6){\small $\Push$}
  \node(e6)(60,-6){\small $\pop$}
  \node(e7)(70,-6){\small $\pop$}
  \node(e8)(80,-6){\small $\Pop$}
  \node(e9)(90,-6){\small $\pushs a$}
  \node(e10)(100,-6){\small $\Push$}
  \node(e11)(110,-6){\small $\pop$}
  \node(e12)(120,-6){\small $\pop$}
  \node(e13)(130,-6){\small $\pop$}
  \node(e14)(140,-6){\small $\Pop$}
  \node(e15)(150,-6){\small $\pushs a$}
  \node(e16)(160,-6){\small $\Push$}
  \node(e17)(170,-6){\small $\pop$}
  \node(e18)(180,-6){\small $\pop$}
  \node(e19)(190,-6){\small $\pop$}
  \node(e20)(200,-6){\small $\pop$}
  \node(e21)(210,-6){\small $\Pop$}
  
  \drawedge[dash={6 4}0,AHnb=0](n0,n21){}
  
  \node[Nw=0.1,Nh=0.1](2)(20,14.4){}
  \drawedge[AHnb=1,linecolor=blue](2,n2){}
  \node[Nw=0.1,Nh=0.1](1)(10,6.4){}
  \node[Nw=0.1,Nh=0.1](2)(30,6.4){}
  \drawedge[AHnb=0,linecolor=red](n1,1){}
  \drawedge[AHnb=0,linecolor=red](1,2){}
  \drawedge[AHnb=1,linecolor=red](2,n3){}
  \node[Nw=0.1,Nh=0.1](2)(60,12.4){}
  \drawedge[AHnb=1,linecolor=blue](2,n6){}
  \node[Nw=0.1,Nh=0.1](2)(70,14.4){}
  \drawedge[AHnb=1,linecolor=blue](2,n7){}
  \node[Nw=0.1,Nh=0.1](1)(50,6.4){}
  \node[Nw=0.1,Nh=0.1](2)(80,6.4){}
  \drawedge[AHnb=0,linecolor=red](n5,1){}
  \drawedge[AHnb=0,linecolor=red](1,2){}
  \drawedge[AHnb=1,linecolor=red](2,n8){}
  \node[Nw=0.1,Nh=0.1](2)(110,10.4){}
  \drawedge[AHnb=1,linecolor=blue](2,n11){}
  \node[Nw=0.1,Nh=0.1](2)(120,12.4){}
  \drawedge[AHnb=1,linecolor=blue](2,n12){}
  \node[Nw=0.1,Nh=0.1](2)(130,14.4){}
  \drawedge[AHnb=1,linecolor=blue](2,n13){}
  \node[Nw=0.1,Nh=0.1](1)(100,6.4){}
  \node[Nw=0.1,Nh=0.1](2)(140,6.4){}
  \drawedge[AHnb=0,linecolor=red](n10,1){}
  \drawedge[AHnb=0,linecolor=red](1,2){}
  \drawedge[AHnb=1,linecolor=red](2,n14){}
  \node[Nw=0.1,Nh=0.1](1)(150,8.4){}
  \node[Nw=0.1,Nh=0.1](2)(170,8.4){}
  \drawedge[AHnb=0,linecolor=blue](n15,1){}
  \drawedge[AHnb=0,linecolor=blue](1,2){}
  \drawedge[AHnb=1,linecolor=blue](2,n17){}
  \node[Nw=0.1,Nh=0.1](1)(90,10.4){}
  \node[Nw=0.1,Nh=0.1](2)(180,10.4){}
  \drawedge[AHnb=0,linecolor=blue](n9,1){}
  \drawedge[AHnb=0,linecolor=blue](1,2){}
  \drawedge[AHnb=1,linecolor=blue](2,n18){}
  \node[Nw=0.1,Nh=0.1](1)(40,12.4){}
  \node[Nw=0.1,Nh=0.1](2)(190,12.4){}
  \drawedge[AHnb=0,linecolor=blue](n4,1){}
  \drawedge[AHnb=0,linecolor=blue](1,2){}
  \drawedge[AHnb=1,linecolor=blue](2,n19){}
  \node[Nw=0.1,Nh=0.1](1)(0,14.4){}
  \node[Nw=0.1,Nh=0.1](2)(200,14.4){}
  \drawedge[AHnb=0,linecolor=blue](n0,1){}
  \drawedge[AHnb=0,linecolor=blue](1,2){}
  \drawedge[AHnb=1,linecolor=blue](2,n20){}
  \node[Nw=0.1,Nh=0.1](1)(160,6.4){}
  \node[Nw=0.1,Nh=0.1](2)(210,6.4){}
  \drawedge[AHnb=0,linecolor=red](n16,1){}
  \drawedge[AHnb=0,linecolor=red](1,2){}
  \drawedge[AHnb=1,linecolor=red](2,n21){}
\end{gpicture}

\begin{gpicture}[ignore, name=gpic:nw1]
  \gasset{Nw=4,Nh=4,Nframe=n,AHnb=1,AHLength=1.5,AHlength=1}
  \unitlength=.726315789474mm
  \node(n0)(0,0){\small 0}
  \node(n1)(10,0){\small 1}
  \node(n2)(20,0){\small 2}
  \node(n3)(30,0){\small 3}
  \node(n4)(40,0){\small 4}
  \node(n5)(50,0){\small 5}
  \node(n6)(60,0){\small 6}
  \node(n7)(70,0){\small 7}
  \node(n8)(80,0){\small 8}
  \node(n9)(90,0){\small 9}
  \node(n10)(100,0){\small 10}
  \node(n11)(110,0){\small 11}
  \node(n12)(120,0){\small 12}
  \node(n13)(130,0){\small 13}
  \node(n14)(140,0){\small 14}
  \node(n15)(150,0){\small 15}
  \node(n16)(160,0){\small 16}
  \node(n17)(170,0){\small 17}
  
  \node(e0)(0,-5){\small $\Push$}
  \node(e1)(10,-5){\small $\pushs a$}
  \node(e2)(20,-5){\small $\Push$}
  \node(e3)(30,-5){\small $\pushs a$}
  \node(e4)(40,-5){\small $\Push$}
  \node(e5)(50,-5){\small $\pushs a$}
  \node(e6)(60,-5){\small $\Push$}
  \node(e7)(70,-5){\small $\pushs b$}
  \node(e8)(80,-5){\small $\pop$}
  \node(e9)(90,-5){\small $\pop$}
  \node(e10)(100,-5){\small $\collapse$}
  \node(e11)(110,-5){\small $\Push$}
  \node(e12)(120,-5){\small $\pushs a$}
  \node(e13)(130,-5){\small $\Push$}
  \node(e14)(140,-5){\small $\pushs b$}
  \node(e15)(150,-5){\small $\pop$}
  \node(e16)(160,-5){\small $\pop$}
  \node(e17)(170,-5){\small $\collapse$}

  \drawedge[dash={6 4}0,AHnb=0](n0,n17){}

  \node[Nw=0.1,Nh=0.1](1)(70,6.4){}
  \node[Nw=0.1,Nh=0.1](2)(80,6.4){}
  \drawedge[AHnb=0,linecolor=blue](n7,1){}
  \drawedge[AHnb=0,linecolor=blue](1,2){}
  \drawedge[AHnb=1,linecolor=blue](2,n8){}
  \node[Nw=0.1,Nh=0.1](1)(50,10.4){}
  \node[Nw=0.1,Nh=0.1](2)(90,10.4){}
  \drawedge[AHnb=0,linecolor=blue](n5,1){}
  \drawedge[AHnb=0,linecolor=blue](1,2){}
  \drawedge[AHnb=1,linecolor=blue](2,n9){}
  \node[Nw=0.1,Nh=0.1](1)(60,8.4){}
  \node[Nw=0.1,Nh=0.1](2)(100,8.4){}
  \drawedge[AHnb=0,linecolor=red](n6,1){}
  \drawedge[AHnb=0,linecolor=red](1,2){}
  \drawedge[AHnb=1,linecolor=red](2,n10){}
  \node[Nw=0.1,Nh=0.1](1)(40,12.4){}
  \node[Nw=0.1,Nh=0.1](2)(100,12.4){}
  \drawedge[AHnb=0,linecolor=red](n4,1){}
  \drawedge[AHnb=0,linecolor=red](1,2){}
  \drawedge[AHnb=1,linecolor=red](2,n10){}
  \node[Nw=0.1,Nh=0.1](1)(20,14.4){}
  \node[Nw=0.1,Nh=0.1](2)(100,14.4){}
  \drawedge[AHnb=0,linecolor=red](n2,1){}
  \drawedge[AHnb=0,linecolor=red](1,2){}
  \drawedge[AHnb=1,linecolor=red](2,n10){}
  \node[Nw=0.1,Nh=0.1](1)(140,6.4){}
  \node[Nw=0.1,Nh=0.1](2)(150,6.4){}
  \drawedge[AHnb=0,linecolor=blue](n14,1){}
  \drawedge[AHnb=0,linecolor=blue](1,2){}
  \drawedge[AHnb=1,linecolor=blue](2,n15){}
  \node[Nw=0.1,Nh=0.1](1)(120,10.4){}
  \node[Nw=0.1,Nh=0.1](2)(160,10.4){}
  \drawedge[AHnb=0,linecolor=blue](n12,1){}
  \drawedge[AHnb=0,linecolor=blue](1,2){}
  \drawedge[AHnb=1,linecolor=blue](2,n16){}
  \node[Nw=0.1,Nh=0.1](1)(130,8.4){}
  \node[Nw=0.1,Nh=0.1](2)(170,8.4){}
  \drawedge[AHnb=0,linecolor=red](n13,1){}
  \drawedge[AHnb=0,linecolor=red](1,2){}
  \drawedge[AHnb=1,linecolor=red](2,n17){}
  \node[Nw=0.1,Nh=0.1](1)(110,12.4){}
  \node[Nw=0.1,Nh=0.1](2)(170,12.4){}
  \drawedge[AHnb=0,linecolor=red](n11,1){}
  \drawedge[AHnb=0,linecolor=red](1,2){}
  \drawedge[AHnb=1,linecolor=red](2,n17){}
  \node[Nw=0.1,Nh=0.1](1)(40,12.4){}
  \node[Nw=0.1,Nh=0.1](2)(170,12.4){}
  \node[Nw=0.1,Nh=0.1](1)(20,14){}
  \node[Nw=0.1,Nh=0.1](2)(170,14){}
  \node[Nw=0.1,Nh=0.1](1)(0,16.4){}
  \node[Nw=0.1,Nh=0.1](2)(170,16.4){}
  \drawedge[AHnb=0,linecolor=red](n0,1){}
  \drawedge[AHnb=0,linecolor=red](1,2){}
  \drawedge[AHnb=1,linecolor=red](2,n17){}
\end{gpicture}
 

\section{Introduction}\label{sec:introduction}

The study of higher-order pushdown systems (HOPDS), has been a prominent line of
research \cite{Aho67,Maslov70,Greibach70,KnapikNU:2002, CarayolW03,Broadbent12,
TCachat03, Ong:2006, Ong-survey:2013, Ong-survey:2015}.  HOPDS are equipped with
a stack (order-1 stack, classical pushdown), or a stack of stacks (order-2
stack), or a stack of stacks of stacks (order-3 stack) and so on.  They
naturally extend the classical pushdown systems to higher orders, and at the
same time they can be used to model higher-order functions
\cite{HagueMOS:2008,Ong-survey:2013,Ong-survey:2015} which is a feature
supported by many widely-used programming languages like scala, python, etc.
\cite{wikipedia:higher-order-functions}.

An extension of HOPDS known as Collapsible HOPDS (CPDS) characterizes recursive
schemes \cite{HagueMOS:2008}.  In this article we focus on CPDS of order 2
(denoted $\twocpds$).  Classically, HOPDS and CPDS can be thought as generating
a set of words (linear behaviour), or a tree (branching behaviour) or a
configuration graph \cite{Ong-survey:2015}.  Here we consider yet another way of
understanding them, as generators of linear behaviours with matching relations,
like in nested words \cite{AlurM09}.
We call these structures order-2 nested words (\twonws).  They are essentially
words augmented with two binary relations --- an {\blue order-1 nesting
relation} and an {\red order-2 nesting relation} which link matching pushes and
pops or collapses of the stack of the respective order.  See
Figure~\ref{fig:ex2nw} depicting a \twonw $\nw_1$ with collapse and 
another one $\nw_2$ without collapse.

\begin{figure}
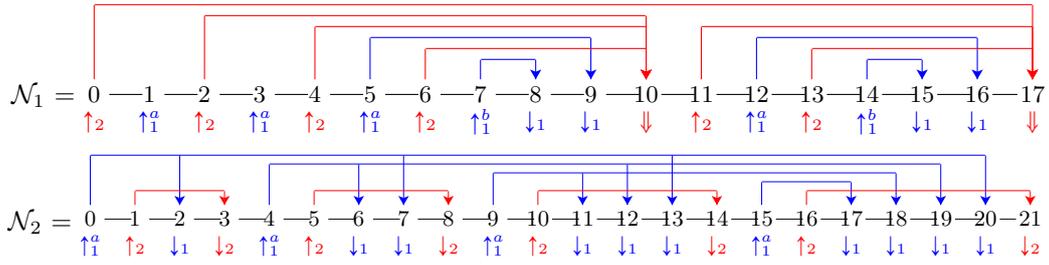

\centering
{
$\nw_1$ = \raisebox{-4.5mm}{\gusepicture{gpic:nw1}}\\[2mm]
$\nw_2$ = \raisebox{-3.5mm}{\gusepicture[scale=.93]{gpic:nw2}}
}
\caption{Two \twonws along with the sequence of operations generating them.
$\push$ means order-1 push, $\pop$ means order-1 pop, $\Push$ means order-2
push, $\Pop$ means order-2 pop, and $\collapse$ means collapse.
}\label{fig:ex2nw}
\end{figure}

We provide a characterisation of the push matching a given pop or collapse, by a
context-free grammar. This allows us to compute the nesting edges, given a
sequence of operations.  Based on it, we have linear time algorithm, \Nestify,
for doing the same.
Our tool \Nestify, accessible at \url{http://www.lsv.fr/~gastin/hopda},
generates a \twonw representation in several formats, including pdf pictures as
in Example~\ref{fig:ex2nw}.
 
We propose propositional dynamic logic with loop and converse ($\PDL$) and
monadic second-order logic ($\MSONW$) over order-2 nested words to specify
properties of \twocpds.  \PDL is a navigating logic which can walk in the
order-2 nested word by moving along the nesting edges and the linear edges.  It
is powerful enough to subsume usual temporal logic operators.
These logics are very expressive since they can use the nesting relations.  We
show that the satisfiability checking of these logics, and model checking of
$\twocpds$ against them
are undecidable.  The reason is that we can interpret grids in \twonws using
these logics.
  
  Our results are quite surprising, since they differ from the established results under classical semantics. Strikingly
\begin{itemize}[nosep]
  \item Model checking and satisfiability problems of $\MSONW$ and $\PDL$ under $\twonw$ semantics
  turn out to be
  undecidable, even when the $\twocpds$ is non-collapsible.  Further, in our
  undecidability proof, the height of the order-2 stack is bounded by 2.  On the
  other hand, MSO over non-collapsible 2-HOPDS under classical semantics is decidable
  \cite{Caucal:2002,KnapikNU:2002}.

  \item The satisfiability and model checking problems described above can be
  reduced to that of non-collapsible $\twocpds/\twonws$.  Contrast this with the
  fact that, when considering HOPDS under classical semantics
  collapse is strictly more powerful.
  \item In \cite{Ong:2006}, Ong showed that $\mu$-calculus over $\twocpds$
  (classical semantics) is decidable.  In the case of $\twonws$, $\PDL$ is
  undecidable even over non-collapsible $2$-HOPDS.
\end{itemize}

Inspired by the success of under-approximation techniques in verification of
otherTturing powerful settings like multi-pushdown systems, message passing
systems etc., we propose an under-approximation for \twocpds, to confront the
undecidability.  This under-approximation, called bounded-pop, bounds the number
of order-1 pops that a push can have. Notice that this does not bound the 
height of order-1 or order-2 stacks.
With this restriction we gain decidability for satisfiability and model checking
of $\MSONW$.  For $\PDL$, we show that these problems are
$\textsc{ExpTime}$-Complete.
  
We establish decidability by showing that bounded-pop \twonws can be
interpreted over trees.  Towards a tree-interpretation, we first lift the notion
of split-width \cite{CGN12,cyriac-phd2014,AGN-atva14} to \twonws, and show that
bounded-pop \twonws have a bound on split-width.  Split-width was first
introduced in \cite{CGN12} for MSO-decidability of multiply-nested words.  It
was later generalised to message sequence charts with nesting (also concurrent
behaviours with matching) \cite{cyriac-phd2014,AGN-atva14}.
\textcolor{black}{Bounded split-width \twonws have bounded (special) tree-width
\cite{Courcelle10}, and hence bounded-pop \twonws can be effectively interpreted
over special tree-terms.}

\section{Order-2 pushdown systems with collapse}\label{sec:2cpds}
  
\subparagraph{Order-2 stacks.} An order-2 stack is a stack of stacks.  In a
collapsible order-2 stack, the stack symbols may in addition contain a pointer
to some stack in the stack of stacks.  Let $\LabelSet$ be a finite set of stack
symbols.  An order-2 stack is of the form $W=[[u_1][u_2] \dots [u_n]]$, where
each $[u_i]$ is a stack over $\LabelSet$ with collapse-pointers to stacks below.
Thus we may see the contents $u_i$ of the $i$-th stack as a word from $\LabelSet
\times \{1, 2, \dots, {i-1}\}$ where the second component of an entry indicates
the index of the stack to which the collapse-link points.  The empty order-2
stack is denoted $[[]]$, where the order-2 stack contains an empty stack.  We
have the following operations on order-2 stacks:
\begin{itemize}[nosep]
  \item $\Push$:  duplicates the topmost stack in the order-2 stack. 
  That is, $\Push([[u_1][u_2]\dots[u_n]]) = [[u_1][u_2]\dots[u_n][u_{n+1}]]$ with $u_{n+1} = u_n$. 

  \item $\Pop$:  pops the topmost stack from the order-2 stack. 
  That is, $\Pop([[u_1][u_2]\dots[u_n]]) = [[u_1][u_2]\dots[u_{n-1}]]$.
  Notice that $\Pop([[u_1]])$ is undefined.  

  \item $\pushs s$:  pushes a symbol $s$ to the top of the topmost stack. Further the pushed symbol contains a ``collapse link'' to the topmost but one stack of the order-2 stacks.\footnote{Collapse links to order-1 stack symbols, or pushes without collapse links are not considered for simplicity. These are, however, easy to simulate thanks to $\pop$ and the top-test.}
  $\pushs s([[u_1][u_2]\dots[u_n]]) = [[u_1][u_2]\dots[u_n(s,n-1)]]$.
  
  \item $\pop$:  removes the topmost element from the topmost stack. 
  $\pop([[u_1][u_2]\dots[u_n]]) = [[u_1][u_2]\dots[u'_n]]$, if $u_n = u'_n(s,i)$.
  Notice that $\pop([[u_1][u_2]\dots[]])$ is undefined.  
  
  \item $\collapse$:
  the collapse operation pops the stacks in the order-2 stack until the stack
  pointed-to by the link in the topmost symbol of the previously topmost stack
  becomes the topmost stack.  $\collapse([[u_1][u_2]\dots[u_n]]) =
  [[u_1][u_2]\dots[u_i]]$, if $u_n = u'_n(s,i)$.  
  
  \item $\toptest(s)$: checks if the topmost symbol of the topmost stack is $s$.
  
  Hence, $\toptest(s)([[u_1][u_2]\dots[u_n]]) = [[u_1][u_2]\dots[u_n]]$, if $u_n
  = u'_n(s,i)$.  It is undefined otherwise.  Also, we can check whether the
  topmost stack is empty by $\toptest(\bot)$.
\end{itemize}
The above defined operations form the set $\OpSet(\LabelSet)$.

\begin{wrapfigure}[14]{r}[.5\columnsep]{32mm}
  \gusepicture[scale=.9]{gpic:twocpds}
  \medskip
  \gusepicture[scale=.9]{gpic:twocpdsNoCollapse}
  \label{ex:twocpdshalfgrid}
\end{wrapfigure}

\subparagraph{\twocpds} is a finite state system over a finite alphabet
$\Alphabet$ equipped with an order-2 stack.  Formally it is a tuple $\hopds =
(\StateSet, \StackStateSet, \TransitionSet, \InitialState, \FinalStates)$ where
$\StateSet$ is the finite set of states, $\StackStateSet$ is the set of stack
symbols/labels, $\InitialState$ is the initial state, $\FinalStates$ is the set
of accepting states, and $\TransitionSet \subseteq \StateSet \times \Alphabet
\times \OpSet(\StackStateSet) \times \StateSet$ is the set of transitions.  On
the right, we have a \twocpds $\hopds_1$ with collapse and a second one
$\hopds_2$ without collapse operations.

A \emph{configuration} is a pair $\conf = (\State, W)$ where $\State \in
\StateSet$ is a state and $W$ is an order-2 stack.  
The \emph{initial} configuration $\conf_0 =(\InitialState,
[[]])$.
 A configuration $\conf = (\State, W)$ is \emph{accepting}
if $\State \in \FinalStates$. 
We write $\conf
\goesto{\trans} \conf' $ for configurations
$\conf = (\State,W)$,  $\conf' = (\State', W')$ and  transition
$\trans = (\State, a, \op(\Label), \State')$,  if $W' = \op(\Label)(W)$.
A \emph{run} $\run$ of $\hopds$ is an alternating sequence of configurations and
transitions, starting from the initial configuration, and conforming to the
relation $\goesto{}$, i.e., $\run = \conf_0 \goesto{\trans_1} \conf_1
\goesto{\trans_2} \conf_2 \ldots \goesto{\trans_n} \conf_n$.
We say that $\run$ is an \emph{accepting run} if $\conf_n$ is accepting.

Next, we aim at understanding the linear behaviours of $\twocpds$ as order-2
nested words ($\twonw$).  For instance, $\nw_1$ and $\nw_2$ of
Figure~\ref{fig:ex2nw} are generated, respectively, by $\hopds_1$ and $\hopds_2$
above.  We give the formal definition below.

\subparagraph{Order-2 nested words ($\twonw$).} %
We propose words augmented with nesting relations to capture the behaviours of
\twocpds, analogous to nested words for pushdown systems.  We have two nesting
relations $\NestRelOne$ and $\NestRelTwo$, for order-1 stacks and order-2 stacks
respectively.  For each position $j$ executing $\pop$, we find the 
position $i$ at which the popped symbol was pushed and we link these matching 
positions with $i\NestRelOne j$.  Notice that since a stack may be duplicated multiple times, a
$\push$ event may have multiple $\pop$ partners.  For instance, in $\nw_2$
above, the $\push$ at position 4 is matched by the $\pop$ at positions 6,12,19.
Similarly, $\NestRelTwo$ links $\Push$ events with matching $\Pop$ events.
Every $\Push$ event may have at most one $\NestRelTwo$ partner, since each
pushed stack is popped at most once.  A $\collapse$ event is seen as popping
several stacks in one go, hence, several $\Push$ events may be linked to single
$\collapse$ event by $\NestRelTwo$ relation.  For instance, in $\nw_1$ above 
the collapse at position 11 pops the stacks pushed at positions 2,4,6.
Thus an order-2 nested word ($\NWtwo$) over an alphabet $\Alphabet$ is a tuple
$\nw = \tuple{w, \NestRelOne, \NestRelTwo}$ where $w$ is a word over
$\Alphabet$, and $\NestRelOne$ and $\NestRelTwo$ are binary relations over
positions of $w$.  

The language of a $\twocpds$ over an alphabet $\Alphabet$ is a set of $\twonw$
over $\Alphabet$ generated by accepting runs.  It is denoted $\Lang(\hopds)$. When depicting $\twonw$s, we
sometimes do not indicate the labelling by the finite alphabet, but often
indicates the type of the stack operation.  When we express the letter of the
alphabet and the operation, we just write them next to each other.  For example
$a\pushs s$ would mean that the label is $a$, and that position performs $\pushs
s$.

\medskip
Given a generating sequence $\op_0\op_1\cdots\op_n\in\OpSet^+$ of operations,
either it is not valid, or there are unique $\NestRelOne$ and $\NestRelTwo$
which conform to the order-2 stack policy.  To characterize these relations, we
define the position $\mypush(n)$ at which the current (after $\op_n$) top stack
symbol was pushed and the position $\myPush(n)$ at which the current top order-1
stack was pushed/duplicated.  We let $\mypush(n)=-1$ if the top (order-1) stack
is empty.  We let $\myPush(n)=-1$ if the order-2 stack contains only one order-1
stack.
For instance, with the sequence generating $\nw_2$ we have
$\myPush(5)=5=\myPush(7)$, $\myPush(4)=-1=\myPush(8)$,
$\mypush(4)=4=\mypush(5)=\mypush(8)$ and $\mypush(0)=0=\mypush(3)=\mypush(6)$
and $\mypush(2)=-1=\mypush(7)$. Also, in the \twonw $\nw_1$ we have 
$\mypush(11)=1$ and $\myPush(11)=0$.

Surprisingly, $\mypush$ and $\myPush$ can be characterized by a context-free
grammar.  We denote by $L_1$ and $L_2$ the languages defined by the
non-terminals $S_1$ and $S_2$ of the following grammar: 

\medskip\noindent\hfil
$\begin{array}{rcl}
  S_1 &\rightarrow& \push \mid S_1 \Push 
  \mid S_1 S_1 \pop 
  \mid S_1 S_2 \Pop 
  \mid S_1 S_2 S_1 \collapse 
  \\
  S_2 &\rightarrow& \Push \mid S_2 \push \mid S_2 \pop 
  \mid S_2 S_2 \Pop 
  \mid S_2 S_2 S_1 \collapse \,.
\end{array}$

\begin{restatable}{proposition}{rstpropmypush}\label{prop:mypush}
  Let $\op_0\op_1\cdots\op_n\in\OpSet^+$ be a valid push/pop/collapse sequence. 
  Then, for all $0\leq i\leq j\leq n$ we have (proof in 
  Appendix~\ref{app:characterisation})
  \begin{enumerate}[nosep,label={P\arabic*.},ref={P\arabic*},leftmargin=7mm,align=right]
    \item\label{item:P1} $\mypush(j)=i$ iff $\op_i\cdots\op_j\in L_1$,
  
    \item\label{item:P2} $\myPush(j)=i$ iff $\op_i\cdots\op_j\in L_2$,

    \item\label{item:P3} $\mypush(j)=-1$ iff $\op_k\cdots\op_j\notin L_1$ for 
    all $0\leq k\leq j$,
  
    \item\label{item:P4} $\myPush(j)=-1$ iff $\op_k\cdots\op_j\notin L_2$ for 
    all $0\leq k\leq j$.
  \end{enumerate}
\end{restatable}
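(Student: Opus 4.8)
The plan is to reduce \ref{item:P3} and \ref{item:P4} to \ref{item:P1} and \ref{item:P2}, and then to prove the two biconditionals \ref{item:P1} and \ref{item:P2} by separating the \emph{soundness} direction (grammar membership forces the stack identity) from the \emph{completeness} direction (the stack identity forces grammar membership). For \ref{item:P3}, since $\mypush(j)$ is a well-defined single value, $\mypush(j)=-1$ holds exactly when $\mypush(j)\neq i$ for every $0\le i\le j$; by the two implications of \ref{item:P1} this is equivalent to $\op_i\cdots\op_j\notin L_1$ for all such $i$, and the same argument derives \ref{item:P4} from \ref{item:P2}. I would first record that every word of $L_1$ begins with $\push$ and every word of $L_2$ begins with $\Push$ (the leftmost nonterminal of any derivation is eventually rewritten by the corresponding base rule). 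This is exactly what makes membership depend only on the factor $\op_i\cdots\op_j$ and not on the prefix $\op_0\cdots\op_{i-1}$, and it underlies the whole characterisation.

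For soundness I would prove, by structural induction on the derivation, two context-independent invariants describing how a factor acts on an arbitrary executable stack $[[u_1]\cdots[u_m]]$. If $\op_i\cdots\op_j\in L_1$, then (when enough stacks are present) the factor is executable, its first move $\push$ deposits a symbol $s$ with collapse link to index $m-1$, and the result has the form $[[u_1]\cdots[u_{m-1}][u_m s]\cdots[u_m s]]$ with one or more copies of $[u_m s]$ on top and everything strictly below untouched; in particular the top symbol is the one pushed at $i$, so $\mypush(j)=i$. Symmetrically, if $\op_i\cdots\op_j\in L_2$, then the factor starts with $\Push$, adds a net of exactly one stack, and leaves $[[u_1]\cdots[u_m][v]]$ whose new top stack $v$ was created at $i$, so $\myPush(j)=i$. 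Each production is checked against these invariants; the only delicate rules are the two collapse rules $S_1\to S_1 S_2 S_1\collapse$ and $S_2\to S_2 S_2 S_1\collapse$, where one must verify that the symbol pushed by the final $S_1$-factor carries a collapse link pointing exactly to the stack produced by the first factor, so that $\collapse$ pops precisely down to it.

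For completeness I would run a simultaneous induction on $j$ for both \ref{item:P1} and \ref{item:P2}, with a case analysis on the last operation $\op_j$. The easy cases fix $i=j$ ($\op_j=\push$ for $\mypush$, $\op_j=\Push$ for $\myPush$), and the ``transparent'' moves that do not change the relevant top object are handled by the unary rules $S_1\to S_1\Push$, $S_2\to S_2\push$, $S_2\to S_2\pop$ together with the hypothesis at $j-1$. The heart is the three closing moves $\pop,\Pop,\collapse$, which require a matching lemma: \emph{the object uncovered by a closing move equals the object on top immediately before its matching opening move}. For $\op_j=\pop$ with $\mypush(j)=i$, set $k=\mypush(j-1)$; then $\op_k\cdots\op_{j-1}\in L_1$ by induction and the lemma gives $\mypush(k-1)=i$, so $\op_i\cdots\op_j\in L_1$ via $S_1\to S_1 S_1\pop$. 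For $\op_j=\Pop$ take $k=\myPush(j-1)$ and conclude through $S_1\to S_1 S_2\Pop$ (resp.\ $S_2\to S_2 S_2\Pop$). For $\op_j=\collapse$ set $l=\mypush(j-1)$ and $k=\myPush(l-1)$, so $\op_l\cdots\op_{j-1}\in L_1$ and $\op_k\cdots\op_{l-1}\in L_2$ by induction; applying the lemma to the collapse link of the symbol pushed at $l$ yields $\mypush(k-1)=i$ (resp.\ $\myPush(k-1)=i$), producing the parse $S_1 S_2 S_1\collapse$ (resp.\ $S_2 S_2 S_1\collapse$).

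The main obstacle I anticipate is the bookkeeping around collapse. In both directions one must show that a collapse link, fixed at push time to the then-topmost-but-one stack as an absolute index, still designates the intended stack at collapse time; this needs the invariant that every $L_1$/$L_2$ factor preserves all stacks strictly below the one it operates on, and the fact that between the opening $\Push$ and the closing $\collapse$ the intervening factors neither remove nor modify that target stack. Making the matching lemma rigorous — that the symbol or stack revealed by $\pop$, $\Pop$, or $\collapse$ is exactly the one present just before the matching opening move — is the technical core, and is precisely where the spurious copies left on top by an unmatched $\Push$ inside an $L_1$-factor must be shown to be harmless.
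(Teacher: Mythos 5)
Your overall plan is sound and, for the ``completeness'' half, is essentially the paper's proof: the paper establishes P1 and P2 simultaneously by induction on $j$ with a case split on $\op_j$, writing each case as a chain of equivalences whose last link is exactly your matching lemma (the symbol uncovered by $\pop$ is the top symbol just before the matching $\push$; the stack uncovered by $\Pop$ or $\collapse$ is the top stack just before the matching $\Push$), and it dispatches P3/P4 from P1/P2 exactly as you do. Where you genuinely diverge is in splitting off a separate ``soundness'' direction proved by structural induction on derivations with an explicit stack-shape invariant; the paper avoids this by packing both directions into the single iff-chain (using that each production of $S_1$, resp.\ $S_2$, ends in a distinct terminal, so the last operation determines the production, and the induction hypothesis pins down the factorisation points). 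Your two-directional version is more verbose but arguably more transparent about \emph{why} grammar membership forces the stack identity, and the leading-symbol observation ($L_1\subseteq\push\OpSet^*$, $L_2\subseteq\Push\OpSet^*$) is a useful explicit remark.

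One correction is needed before the soundness induction goes through: the invariant you state for $L_1$-factors is too strong and is in fact false. For the word $\push\,\push\,\Push\,\pop\in L_1$ executed from $[[u]]$ one obtains $[[u s s'][u s]]$, not a configuration of the form $[[u s]\cdots[u s]]$ with identical copies on top: the intermediate stacks created above level $m-1$ need not all equal $[u_m s]$, since inner factors may enlarge them before they are buried by a duplication. The inductive step for $S_1\to S_1 S_1\pop$ (and the $\Pop$/$\collapse$ rules) therefore cannot re-establish your invariant. The fix is to weaken it to what you actually use later: an $L_1$-factor started from $p$ stacks with top stack $[u]$ is executable, leaves at least $p$ stacks, does not touch stacks $1,\dots,p-1$, and ends with top stack $[u\,s]$ where $s$ is (a copy of) the symbol pushed by its leading $\push$, carrying a collapse link to index $p-1$; together with the corresponding $L_2$-invariant (net gain of exactly one stack, lower stacks untouched, new top stack created by the leading $\Push$), this is enough for all five productions, including the collapse-link bookkeeping you rightly identify as the delicate point.
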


This  characterization will be crucial in the rest of the paper,  to  justify  correctness of both formulas in Section~\ref{sec:logic}, and also tree-automata constructions in our decision procedure.  Also, it yields a linear time algorithm \Nestify (\url{http://www.lsv.fr/~gastin/hopda}).


\section{PDL and MSO over order-2 nested words}\label{sec:logic}

We introduce two logical formalisms for specifications over \twonw.  The first
one is propositional dynamic logic which essentially navigates
through the edges of a \twonw, checking positional properties on the way.  The
second one is the yardstick monadic second-order logic, which extends MSO over
words with the $\NestRelOne$ and $\NestRelTwo$ binary relations.

Propositional dynamic logic was originally introduced in \cite{FisL79} to study
the branching behaviour of programs.  Here we are not interested in the
branching behaviour.  Instead we study the linear time behaviours (words)
enriched with the nesting relations.  Since these are graphs, we take advantage
of the path formulas of PDL based on regular expressions to navigate in the
\twonws.  This is in the spirit of
\cite{HeTh99,BKM-lmcs10,AGN-atva14,AG-fsttcs14} where PDL was used to specify
properties of graph structures such as message sequence charts or multiply
nested words.

\subparagraph{Propositional Dynamic Logic with converse and loop (\PDL)} can
express properties of nodes (positions) as boolean combinations of the existence
of paths and loops.  Paths are built using regular expressions over the edge
relations (and their converses) of order-2 nested words.
The syntax of the node formulas $\varphi$ and path formulas $\pi$ of \PDL are
given by
\begin{eqnarray*}
\varphi & := & a \mid \varphi \vee \varphi \mid\neg \varphi \mid \existspath{\pi}\varphi \mid \existsloop{\pi}\\
\pi &:= & \test{\varphi} \mid {\gonext} \mid {\goprev} \mid {\gopopOne} \mid {\gopushOne} \mid {\gopopTwo} \mid {\gopushTwo} 
\mid \pi \conc \pi \mid \pi + \pi \mid \pi^\ast
\end{eqnarray*}
where $a \in \Alphabet$. The node formulas are evaluated on positions of an order-2 nested word, whereas
path formulas are evaluated on pairs of positions. 
We give the semantics below ($i,j, i', j'$ vary
over positions of a \twonw $\nw= \tuple{w, \NestRelOne,
\NestRelTwo}$): 
\begin{eqnarray*}
\nw, i \models a & \text{ if } & i\text{th letter of $w$ is $a$}\\
\nw, i \models \varphi_1 \vee \varphi_2 & \text{ if } & \nw, i \models \varphi_1  \text{ or }  \nw, i \models \varphi_2\\
\nw, i \models \neg \varphi & \text{ if } & \text{ it is not the case that }\nw, i \models \varphi\\
\nw, i \models \existspath{\pi}\varphi & \text{ if } & \nw, i, j \models \pi \text{ and } \nw, j \models \varphi \text{ for some j}\\
\nw, i \models \existsloop{\pi} & \text{ if } & \nw, i, i \models \pi \\
\nw, i, j \models \test{\varphi} & \text{ if } & i = j \text{ and } \nw, i \models \varphi\\
\nw, i, j \models {\gonext} & \text{ if } & j \text{ is the successor position of $i$ in the word $w$}\\
\nw, i, j \models {\goprev} & \text{ if } & \nw, j, i \models \gonext\\
\nw, i, j \models {\gopopOne} & \text{ if } & i\NestRelOne j \text{ in the $\twonw$ } \nw \\
\nw, i, j \models {\gopushOne} & \text{ if } & \nw, j, i  \models {\gopopOne} \\
\nw, i, j \models {\gopopTwo} & \text{ if } & i\NestRelTwo j \text{ in the $\twonw $ } \nw\\
\nw, i, j \models {\gopushTwo} & \text{ if } & \nw, j, i  \models {\gopopTwo} \\
\nw, i, j \models {\pi_1 \conc \pi_2} & \text{ if } & \text{ there is a position $k$ such that } \nw, i, k \models \pi_1 \text{ and } \nw, k, j \models {\pi_2}\\
\nw, i, j \models {\pi_1 + \pi_2} & \text{ if } &  \nw, i, j \models \pi_1 \text{ or } \nw, i, j \models {\pi_2}\\
\nw, i, j \models \pi^\ast & \text{ if } &  \text{there exist positions $i_1, \dots ,i_n$ for some $n \ge 1$}\\
&&\text{such that $i = i_1$, $j = i_n$ and $\nw, i_m, i_{m+1} \models \pi$ for all $1 \le m < n$}
\end{eqnarray*}

An \PDL \emph{sentence} is a boolean combination of atomic sentences of the form
$\existsnode{\varphi}$.  An atomic \PDL sentence is evaluated on an order-2
nested word $\nw$.  We have $\nw \models \existsnode{\varphi}$ if there exists a
position $i$ of $\nw$ such that $\nw, i \models \varphi$.

We use abbreviations to include $\true$, $\false$, conjunction, implication,
`$\varphi$ holds after all $\pi$ paths' ($\allpaths{\pi}\varphi$) etc.  We
simply write $\existspath{\pi}$ instead of $\existspath{\pi}\true$ to check the
existence of a $\pi$ path from the current position.  In particular, we can
check the type of a node with $\ispushone={\existspath{\gopopOne}}$,
$\ispopone={\existspath{\gopushOne}}$, and similarly for $\ispushtwo$ and
$\ispoptwo$.  Notice that a collapse node satisfies $\ispoptwo$.  Also,
$\forallnodes{\varphi}=\neg\existsnode{\neg\varphi}$ states that $\varphi$ holds
on all nodes of the \twonw.

\begin{example}\label{ex:matchrel-consitent} 
  We give now path formulas corresponding to the functions $\mypush$ and
  $\myPush$ defined at the end Section~\ref{sec:2cpds}.  We use the
  characterization of Proposition~\ref{prop:mypush}.  Consider first the macro
  $\isfirstpushtwo=\ispushtwo\wedge\neg\existsloop{{\goprev}^+\conc{\gopopTwo}\conc{\gopushTwo}}$
  which identifies uniquely the target of a collapse.
  Then, the deterministic path formulas for $\mypush$ and $\myPush$ are given by
  \begin{align*}
    \pi_{\mypush} &= ( \test{\ispushtwo}\conc{\goprev} 
    + {\gopushOne}\conc{\goprev} 
    + {\gopushTwo}\conc\test{\isfirstpushtwo}\conc{\goprev} )^\ast
    \conc \test{\ispushone}
    \\
    \pi_{\myPush} &= ( \test{\ispushone}\conc{\goprev}
    + \test{\ispopone}\conc{\goprev}
    + {\gopushTwo}\conc\test{\isfirstpushtwo}\conc{\goprev} )^\ast 
    \conc \test{\ispushtwo}
  \end{align*}
  The matching push of an order-1 pop should coincide with the one dictated by
  $\pi_{\mypush}$ starting from the previous node, i.e., before the top symbol
  was popped.  The situation is similar for an order-2 pop.  Hence, the
  following sentence states that $\NestRelOne$ and $\NestRelTwo$ are
  well-nested.
  $$
  \phi_\mathsf{wn} = \forallnodes{\left( 
  ( \ispopone \implies \existsloop{{\goprev}\conc\pi_{\mypush}\conc\gopopOne}) 
  \wedge
  ( \ispushtwo \implies 
  \existsloop{{\gopopTwo}\conc({\goprev}\conc\pi_{\myPush})^+}) 
  \right)} 
  $$
\end{example}

The \emph{satisfiability problem} $\SAT(\PDL)$ asks: Given an \PDL sentence
$\phi$, does there exist a \twonw $\nw$ such that $\nw \models \phi$?  The
\emph{model checking problem} $\MC(\PDL)$ asks, given an \PDL sentence $\phi$
and a \twocpds $\hopds$, whether $\nw \models \phi$ for all \twonw $\nw$ in
$\Lang(\hopds)$.

\begin{theorem}\label{thm:lcpdl-undecidable}
  The problems $\SAT(\PDL)$ and $\MC(\PDL)$ are both undecidable,
  even for \twonw (or \twocpds) without collapse and order-2 stacks of bounded
  height.
\end{theorem}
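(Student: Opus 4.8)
The plan is to reduce an undecidable problem---concretely the halting problem of Turing machines (equivalently, a tiling problem)---to $\SAT(\PDL)$ and, dually, to $\MC(\PDL)$, by showing that $\PDL$ can \emph{interpret grids} inside $\twonws$. The model of interest is the family of ``half-grid'' $\twonws$ generated by the collapse-free $\twocpds$ $\hopds_2$ of Section~\ref{sec:2cpds}: its $r$-th row performs $\pushs{a}$, then $\Push$, then pops the duplicated stack completely (enforced by the guards $\toptest(a)?;\pop$ and $\toptest(\bot)?;\Pop$), then $\Pop$. Thus the order-1 stack grows by one symbol per row, the order-2 stack never exceeds height $2$, and no $\collapse$ is ever used---exactly the restricted setting of the statement. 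The word $\nw_2$ of Figure~\ref{fig:ex2nw} is a typical such run, and $\Lang(\hopds_2)$ is precisely the set of these half-grids. In such a word the order-1 pops are the cells of a triangular grid: cell $(i,r)$ is the $\pop$ of the $i$-th $\pushs{a}$ occurring in the $r$-th row, and it exists exactly when $i\le r$, so the grid is the ``light cone'' $\{(i,r):1\le i\le r\}$, which is precisely the tape region a Turing machine started at the left end can reach within $r$ steps.

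The heart of the proof is to recover, in $\PDL$, the two grid adjacencies on these cells. The \emph{horizontal} successor (same row, neighbouring column) is just the path $\gonext$ restricted to $\pop$-cells, since within a row the pops are consecutive. The \emph{vertical} successor (same column, next row) is the relation ``next $\pop$ matched by $\NestRelOne$ to the same order-1 push'': from a cell one follows $\gopushOne$ back to the common push $p$ and then takes its $\gopopOne$-successor lying in the immediately following row. That every order-1 push is $\NestRelOne$-matched to exactly one cell in each subsequent row---illustrated by position~$4$ being matched at $6,12,19$ in Figure~\ref{fig:ex2nw}---is what makes this a faithful grid. These navigations are of the same deterministic style as $\pi_{\mypush}$ and $\pi_{\myPush}$ in Example~\ref{ex:matchrel-consitent}, and I would additionally write a sentence $\phi_{\grid}$ recognizing exactly the half-grid $\twonws$: regularity of the operation sequence per row is a local condition using $\gonext$ and the node tests $\ispushone,\ispopone,\ispushtwo,\ispoptwo$, while well-nestedness of $\NestRelOne,\NestRelTwo$ is already captured by the formula $\phi_{\mathsf{wn}}$ of Example~\ref{ex:matchrel-consitent}.

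With the two adjacencies in hand, a Turing machine $M$ is encoded by letting row $r$ carry the $r$-th configuration of $M$ (tape contents, with the head position and control state marked by extra alphabet symbols). The conditions defining a halting computation are all \emph{local}: the initial configuration sits in the first row; the content of cell $(i,r{+}1)$ is determined from $(i{-}1,r),(i,r),(i{+}1,r)$ by the transition relation of $M$ (with the freshly born diagonal cell forced blank); and some cell carries a halting state. Each is a Boolean combination of node tests on a cell and its horizontal/vertical neighbours, hence a $\PDL$ node formula; call their conjunction $\psi_M$. Then $\phi_{\grid}\wedge\psi_M$ is satisfiable iff some half-grid encodes a halting run of $M$, i.e.\ iff $M$ halts, giving undecidability of $\SAT(\PDL)$. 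For $\MC(\PDL)$ I absorb the shaping into the machine: since $\Lang(\hopds_2)$ is already the set of half-grids, every run of $\hopds_2$ satisfies $\neg\psi_M$ iff no run encodes a halting computation, i.e.\ iff $M$ does not halt, which is undecidable. All formulas use only $\gonext,\goprev,\NestRelOne,\NestRelTwo$ and node tests (never $\collapse$), so both hardness results hold for collapse-free \twocpds and \twonws of order-2 height at most~$2$.

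The main obstacle is the grid interpretation itself, and especially the vertical successor. $\PDL$ provides only existential path and loop modalities with converse, and negation is confined to node formulas, so expressing ``the \emph{immediately} next pop of the same push, with no intervening pop of $p$'' and, dually, ruling out spurious models of $\phi_{\grid}$ must be done through deterministic navigations in the spirit of Example~\ref{ex:matchrel-consitent}, using $\existsloop{\gopopTwo\conc\gopushTwo}$-style tests that pin down row boundaries via $\NestRelTwo$ and exploiting the monotonicity that successive pops of a fixed push fall in successive rows. Once these adjacency relations are shown $\PDL$-definable and $\phi_{\grid}$ is proved correct, the Turing-machine (or tiling) encoding is routine, and the same construction transfers verbatim to $\MSONW$, which can define the adjacencies even more directly.
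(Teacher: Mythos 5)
Your proposal is correct and follows essentially the same route as the paper: it uses the same collapse-free half-grid generator $\hopds_2$, the same \PDL shape formula for the grid, the same horizontal ($\gonext$ over pops) and vertical (next pop of the same order-1 push, pinned down via a $\existsloop{\cdot}$ test exactly as the paper's $\existspath{\gonextpop}\varphi$ macro) adjacencies, and the same Turing-machine encoding, with the model-checking case handled by observing that $\Lang(\hopds_2)$ is precisely the set of half-grids. You correctly identify the one genuine subtlety --- that the vertical successor cannot be negated away inside a path expression and must instead be realized as a loop-based macro --- which is exactly how the paper resolves it.
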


\begin{proof}

Notice that the \twonws generated by the non-collapsible $\twocpds$ $\hopds_2$ (cf. page~\pageref{ex:twocpdshalfgrid}) embed larger and larger half-grids.  For
instance, the \twonw $\nw_2$ of Figure~\ref{fig:ex2nw} embeds a half-grid of size four.  The lines
are embedded within $\NestRelTwo$: 2, then 6,7, then 11,12,13, and finally  17,18,19,20.
Moving right in the grid amounts to moving right in the \twonw, without crossing
a $\Pop$.  Moving down in the grid (e.g., from 6) amounts to going to the next
order-1 pop $\pop$ (which is 12) attached to the same order-1 push $\push$
(which is 4). 

To prove the undecidability, we encode, in $\PDL$,  the computation of a Turing Machine on the half-grid 
  embedded in the \twonw $\nw_2$.
  First, we write a formula $\grid$ stating that the \twonw is of the correct 
  form.  We use $\last={\existspath{{\gopushOne}}\neg\existspath{\goprev}}$ to
  state that the $\push$ matching the current $\pop$ is the first event of the
  \twonw, hence the top order-1 stack is empty after the current $\pop$. Then, 
  $\grid_1 = \existsnode{(\ispushone \wedge \neg\existspath{\goprev})}$ states 
  that the first event is a $\push$. Next,
  \begin{align*}
    \grid_2 &= \forallnodes{(\ispushone \implies 
    \existspath{{\gonext}\conc\test{\ispushtwo}\conc({\gonext}\conc\test{\ispopone})^+\conc\test{\last}\conc{\gonext}\conc\test{\ispoptwo}}
    )} 
  \end{align*}
  states that the successor of every $\push$ is a $\Push$ followed by a 
  sequence of $\pop$ (the line of the grid) emptying the top order-1 stack, 
  followed by a $\Pop$ which restores the order-1 stack. Finally, 
  $\grid_3 = \forallnodes{(\ispoptwo \implies \neg\existspath{\gonext} \vee
  \existspath{{\gonext}\conc{\gopopOne}})}$ states that a $\Pop$ is either the
  last event of the \twonw, or is followed by a $\push$ starting a new line in 
  the grid. One can check that a \twonw $\nw$ satisfies 
  $\grid=\grid_1\wedge\grid_2\wedge\grid_3$ iff it is of the form of the \twonw 
  depicted above.
  
  We can almost interpret in \PDL the half-grid in a \twonw $\nw$ satisfying
  $\grid$.  Nodes of the grid correspond to $\pop$ events.  Moving right in the
  line of the grid corresponds to the path expression
  ${\gonext}\conc\test{\ispopone}$ and similarly for going left.  Moving down in
  the half-grid (e.g., from 6 to 12 in $\nw_2$), corresponds to going to the
  next-pop-from-same-push in the \twonw.  We do not know whether the next-pop
  relation, denoted $\gonextpop$, can be written as a path expression in \PDL.
  But we have a macro for checking a node formula $\varphi$ at the next-pop:
  \begin{align*}
    \existspath{\gonextpop}\varphi & ::= \existsloop{
    (\test{\ispopone}\conc\gonext)^+\conc\test{\ispoptwo}\conc{\gonext}\conc{\gonext}\conc
    ({\gonext}\conc\test{\ispopone})^+\conc\test{\varphi}\conc{\gopushOne}\conc{\gopopOne}
    }
  \end{align*}
  With this, we can write an \PDL formula to encode the computation of a Turing 
  machine starting from the empty configuration. Consecutive lines of the grid 
  correspond to consecutive configurations. For instance, to check that a 
  transition $(p,a,q,b,{\gonext})$ of the Turing machine is applied at some 
  node, we write the formula $(p \wedge \existspath{\gonext}a) \implies
  \existspath{\gonextpop}(b\wedge\existspath{\gonext}q)$. 
  
  We deduce that $\SAT(\PDL)$ is undecidable.
  Since the \twocpds $\hopds_2$ on page \pageref{ex:twocpdshalfgrid} generates 
  all \twonw satisfying the $\grid$ formula, we deduce that $\MC(\PDL)$ is also
  undecidable.
\end{proof}

\subparagraph{Monadic Second-order Logic} over \twonw, denoted $\MSONW$, extends
the classical MSO over words with two binary predicates $\NestRelOne$ and
$\NestRelTwo$.  A formula $\phi$ can be written using the syntax:
$$ 
\phi := a(x) \mid x <y \mid x \NestRelOne y \mid x \NestRelTwo y \mid \phi \vee \phi \mid \neg \phi \mid \exists x\, \phi \mid \exists X \phi \mid x \in X
$$
where $a \in \Alphabet$, $x, y$ are first-order variables and $X$ is a second-order variable. The semantics is as expected.
As in the case of \PDL, we use common abbreviations.

\begin{example} 
  The binary relation $\nextpop$ which links consecutive pops matching the
  same push can be easily expressed in the first-order fragment as 
  $$
  \phi_{\nextpop} (x, y) = \exists z\, \left(z \NestRelOne x \wedge z 
  \NestRelOne y \wedge \neg \exists z'\, (z \NestRelOne z' \wedge x < z' < y ) 
  \right) \,.
  $$
\end{example}

\begin{example}
  The set of all \twonw can be characterised in \MSONW. It essentially says that
  $<$ is a total order, and that the matching relations are valid.  For the
  latter, we first state that matching relations are compatible with the linear
  order, and that they are disjoint in the following sense: the target of a
  $\NestRelOne$ (resp.\ the source of a $\NestRelTwo$) is not part of another
  matching, and the source of a $\NestRelOne$ (resp.\ the target of a
  $\NestRelTwo$) is not part of $\NestRelTwo$ (resp.\ $\NestRelOne$).
  Finally, to state that $\NestRelOne$ and $\NestRelTwo$ are well-nested,
  we take the idea from Example~\ref{ex:matchrel-consitent}.
\end{example}

\begin{example}\label{ex:mso-language-hopds}
  Given a \twocpds $\hopds$, its language $\Lang(\hopds)$ can be characterised
  by an \MSONW formula.  The formula essentially guesses the transitions taken
  at every position using second-order variables and verifies that this guess
  corresponds to a valid accepting run.  The only difficulty is to check that
  top-tests are satisfied. If the transition guessed for position $x$ contains 
  $\toptest(s)$ then we find position $y$ corresponding to $\mypush(x)$ (expressible by an MSO formula equivalent of the formula $\pi_{\mypush}$) and  
  check that the transition guessed at position $y$ contains $\pushs s$.
\end{example}

The \emph{satisfiability problem} $\SAT(\MSONW)$ asks, given an \MSONW sentence
$\phi$, whether $\nw \models \phi$ for some \twonw $\nw$.  The \emph{model
checking problem} $\MC(\MSONW)$ asks, given an \MSONW sentence $\phi$ and a
\twocpds $\hopds$, whether $\nw \models \phi$ for all \twonw
$\nw\in\Lang(\hopds)$.
Since \PDL can be expressed in \MSONW, we deduce from 
Theorem~\ref{thm:lcpdl-undecidable} that
\begin{theorem}
  The problems $\SAT(\MSONW)$ and $\MC(\MSONW)$ are undecidable.
\end{theorem}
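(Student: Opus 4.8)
The plan is to reduce from the undecidable problems of Theorem~\ref{thm:lcpdl-undecidable} by showing that every \PDL sentence can be effectively translated into an \MSONW sentence that holds on exactly the same \twonws. Since the two semantics then agree on every model, satisfiability and model checking for \PDL reduce to the corresponding problems for \MSONW, and the undecidability transfers directly. The \twocpds is untouched by the translation, so both the $\SAT$ and the $\MC$ reductions go through simultaneously.

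First I would define, by mutual induction on the structure of \PDL formulas, two translations: a node formula $\varphi$ is mapped to an \MSONW formula $\widehat{\varphi}(x)$ with one free first-order variable, and a path formula $\pi$ is mapped to an \MSONW formula $\widehat{\pi}(x,y)$ with two free first-order variables. The boolean and atomic cases are immediate: $a$ becomes $a(x)$, disjunction and negation commute with the translation, the formula for $\existspath{\pi}\varphi$ is $\exists y\,(\widehat{\pi}(x,y) \wedge \widehat{\varphi}(y))$, and that for $\existsloop{\pi}$ is $\widehat{\pi}(x,x)$. For path formulas, $\test{\varphi}$ becomes $(x = y) \wedge \widehat{\varphi}(x)$; the navigational atoms are rendered directly by the corresponding binary predicates, e.g.\ $\gopopOne$ becomes $x \NestRelOne y$, $\gopushOne$ becomes $y \NestRelOne x$, and similarly for $\gopopTwo$ and $\gopushTwo$; the successor $\gonext$ becomes $x < y \wedge \neg\exists z\,(x < z \wedge z < y)$ and $\goprev$ its converse. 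Composition $\pi_1\conc\pi_2$ becomes $\exists z\,(\widehat{\pi_1}(x,z) \wedge \widehat{\pi_2}(z,y))$ and union becomes disjunction.

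The single case requiring the full power of \MSONW is the Kleene star, whose semantics is the reflexive–transitive closure of $\pi$. This is not first-order definable, but it is captured by the standard second-order encoding: $\widehat{\pi^\ast}(x,y)$ states that $y$ belongs to every set $Z$ containing $x$ and closed under $\widehat{\pi}$-successors, namely
$$\widehat{\pi^\ast}(x,y) = \forall Z\,\big[\big(x \in Z \wedge \forall u\,\forall v\,((u \in Z \wedge \widehat{\pi}(u,v)) \rightarrow v \in Z)\big) \rightarrow y \in Z\big]\,.$$
A routine induction then establishes, for every \twonw $\nw$ and all positions $i,j$, that $\nw, i \models \varphi$ iff $\nw \models \widehat{\varphi}(i)$ and $\nw, i, j \models \pi$ iff $\nw \models \widehat{\pi}(i,j)$.

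Finally, a \PDL sentence is a boolean combination of atoms $\existsnode{\varphi}$, each of which translates to the \MSONW sentence $\exists x\,\widehat{\varphi}(x)$; preserving the boolean structure yields an \MSONW sentence $\widehat{\phi}$ with $\nw \models \phi$ iff $\nw \models \widehat{\phi}$ for every \twonw $\nw$. Consequently $\SAT(\PDL)$ reduces to $\SAT(\MSONW)$, and since the translation leaves the \twocpds unchanged, $\MC(\PDL)$ reduces to $\MC(\MSONW)$; both \MSONW problems are therefore undecidable by Theorem~\ref{thm:lcpdl-undecidable}. I do not expect any genuine obstacle here: the construction is entirely compositional and routine, the only point deserving care being the second-order quantification used to encode the star.
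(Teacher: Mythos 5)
Your proposal is correct and follows exactly the paper's route: the paper proves this theorem in one line by observing that \PDL can be expressed in \MSONW and invoking Theorem~\ref{thm:lcpdl-undecidable}, which is precisely the reduction you carry out. You merely spell out the standard compositional translation (including the second-order encoding of Kleene star) that the paper leaves implicit.
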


\begin{remark}
  Configuration graphs of \twocpds render MSO undecidable\cite{HagueMOS:2008}.
  For instance, the $\twocpds$ $\hopds_1$ (cf.
  page~\pageref{ex:twocpdshalfgrid}) embeds an infinite half-grid in its
  configuration graph (see, App~\ref{app:grid-configuration-graph}).  Notice
  that \twonw generated by $\hopds_1$ (for instance, $\nw_1$ of
  Figure~\ref{fig:ex2nw}) does not represent the configuration graph, but rather
  some path in it, with extra matching information.
\end{remark}

\newcommand{\marked}[1]{\underline #1}
\newcommand{\transl}[1]{\overline{#1}}
\section{Eliminating collapse}\label{sec:no-collapse}

We can reduce the satisfiability and model checking
problems to variants where there are no collapse operations.  The idea is to
simulate a collapse with a sequence of order-2 pops ($\Pop$).  These pops will
be labelled by a special symbol $\#$ so that we do not confuse it with a normal
order-2 pop.  An internal node is added before such a sequence which indicates
the label of the collapse node (see Figure~\ref{fig:twonw-translation}).
Surprisingly, in \PDL/\MSONW we can express that the number of ${\#}\Pop$ in the
sequence is correct.  We give the details below.

\subparagraph{\twonw to \twonw without collapse.} We expand every collapse node
labelled $a$ by a sequence of the form: $ \marked{a} (\#\Pop)^+$,
with the intention that if we merge all the nodes in this sequence into a single
node labelled by $a\collapse $, we obtain the original $\twonw$ back. Here $\#\Pop$ is a single position, which is labelled $\#$, and is the target of a $\gopopTwo$ edge.  Notice
that the number of $\#\Pop$'s needed for such an encoding is not bounded.  We
will ensure, with the help of \PDL /\MSONW, that the encoding has the precise
number of $\#\Pop$'s needed.

\begin{figure}
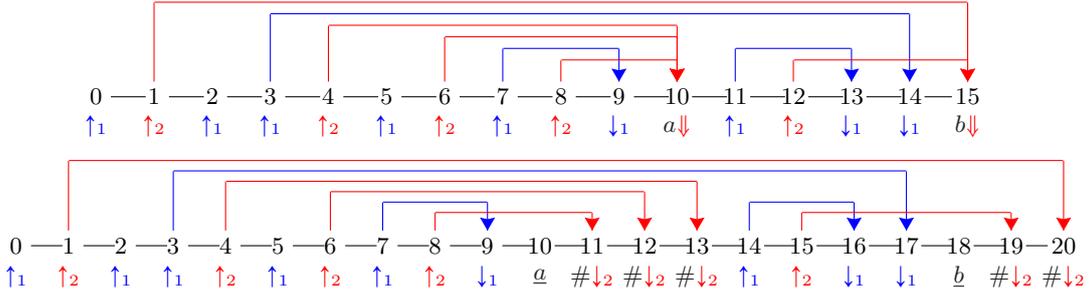

\centering
{
\begin{gpicture}
  \gasset{Nw=4,Nh=4,Nframe=n,AHnb=1,AHLength=2,AHlength=1.5}
  \unitlength=.764705882353mm
  \node(n0)(0,0){\small 0}
  \node(n1)(10,0){\small 1}
  \node(n2)(20,0){\small 2}
  \node(n3)(30,0){\small 3}
  \node(n4)(40,0){\small 4}
  \node(n5)(50,0){\small 5}
  \node(n6)(60,0){\small 6}
  \node(n7)(70,0){\small 7}
  \node(n8)(80,0){\small 8}
  \node(n9)(90,0){\small 9}
  \node(n10)(100,0){\small 10}
  \node(n11)(110,0){\small 11}
  \node(n12)(120,0){\small 12}
  \node(n13)(130,0){\small 13}
  \node(n14)(140,0){\small 14}
  \node(n15)(150,0){\small 15}
  
  \node(e0)(0,-5){\small $\push$}
  \node(e1)(10,-5){\small $\Push$}
  \node(e2)(20,-5){\small $\push$}
  \node(e3)(30,-5){\small $\push$}
  \node(e4)(40,-5){\small $\Push$}
  \node(e5)(50,-5){\small $\push$}
  \node(e6)(60,-5){\small $\Push$}
  \node(e7)(70,-5){\small $\push$}
  \node(e8)(80,-5){\small $\Push$}
  \node(e9)(90,-5){\small $\pop$}
  \node(e10)(100,-5){\small $a\collapse$}
  \node(e11)(110,-5){\small $\push$}
  \node(e12)(120,-5){\small $\Push$}
  \node(e13)(130,-5){\small $\pop$}
  \node(e14)(140,-5){\small $\pop$}
  \node(e15)(150,-5){\small $b\collapse$}

  \drawedge[dash={6 4}0,AHnb=0](n0,n15){}

  \node[Nw=0.1,Nh=0.1](1)(70,8.4){}
  \node[Nw=0.1,Nh=0.1](2)(90,8.4){}
  \drawedge[AHnb=0,linecolor=blue](n7,1){}
  \drawedge[AHnb=0,linecolor=blue](1,2){}
  \drawedge[AHnb=1,linecolor=blue](2,n9){}
  \node[Nw=0.1,Nh=0.1](1)(80,6.4){}
  \node[Nw=0.1,Nh=0.1](2)(100,6.4){}
  \drawedge[AHnb=0,linecolor=red](n8,1){}
  \drawedge[AHnb=0,linecolor=red](1,2){}
  \drawedge[AHnb=1,linecolor=red](2,n10){}
  \node[Nw=0.1,Nh=0.1](1)(60,10.4){}
  \node[Nw=0.1,Nh=0.1](2)(100,10.4){}
  \drawedge[AHnb=0,linecolor=red](n6,1){}
  \drawedge[AHnb=0,linecolor=red](1,2){}
  \drawedge[AHnb=1,linecolor=red](2,n10){}
  \node[Nw=0.1,Nh=0.1](1)(40,12.4){}
  \node[Nw=0.1,Nh=0.1](2)(100,12.4){}
  \drawedge[AHnb=0,linecolor=red](n4,1){}
  \drawedge[AHnb=0,linecolor=red](1,2){}
  \drawedge[AHnb=1,linecolor=red](2,n10){}
  \node[Nw=0.1,Nh=0.1](1)(110,8.4){}
  \node[Nw=0.1,Nh=0.1](2)(130,8.4){}
  \drawedge[AHnb=0,linecolor=blue](n11,1){}
  \drawedge[AHnb=0,linecolor=blue](1,2){}
  \drawedge[AHnb=1,linecolor=blue](2,n13){}
  \node[Nw=0.1,Nh=0.1](1)(30,14.4){}
  \node[Nw=0.1,Nh=0.1](2)(140,14.4){}
  \drawedge[AHnb=0,linecolor=blue](n3,1){}
  \drawedge[AHnb=0,linecolor=blue](1,2){}
  \drawedge[AHnb=1,linecolor=blue](2,n14){}
  \node[Nw=0.1,Nh=0.1](1)(120,6.4){}
  \node[Nw=0.1,Nh=0.1](2)(150,6.4){}
  \drawedge[AHnb=0,linecolor=red](n12,1){}
  \drawedge[AHnb=0,linecolor=red](1,2){}
  \drawedge[AHnb=1,linecolor=red](2,n15){}
  \node[Nw=0.1,Nh=0.1](1)(10,16.4){}
  \node[Nw=0.1,Nh=0.1](2)(150,16.4){}
  \drawedge[AHnb=0,linecolor=red](n1,1){}
  \drawedge[AHnb=0,linecolor=red](1,2){}
  \drawedge[AHnb=1,linecolor=red](2,n15){}
\end{gpicture}
}
\medskip

\begin{gpicture}
  \gasset{Nw=4,Nh=4,Nframe=n,AHnb=1,AHLength=2,AHlength=1.5}
  \unitlength=0.68909090909091mm
  \node(n0)(0,0){\small 0}
  \node(n1)(10,0){\small 1}
  \node(n2)(20,0){\small 2}
  \node(n3)(30,0){\small 3}
  \node(n4)(40,0){\small 4}
  \node(n5)(50,0){\small 5}
  \node(n6)(60,0){\small 6}
  \node(n7)(70,0){\small 7}
  \node(n8)(80,0){\small 8}
  \node(n9)(90,0){\small 9}
  \node(n10)(100,0){\small 10}
  \node(n11)(110,0){\small 11}
  \node(n12)(120,0){\small 12}
  \node(n13)(130,0){\small 13}
  \node(n14)(140,0){\small 14}
  \node(n15)(150,0){\small 15}
  \node(n16)(160,0){\small 16}
  \node(n17)(170,0){\small 17}
  \node(n18)(180,0){\small 18}
  \node(n19)(190,0){\small 19}
  \node(n20)(200,0){\small 20}
  
  \node(e0)(0,-6){\small $\push$}
  \node(e1)(10,-6){\small $\Push$}
  \node(e2)(20,-6){\small $\push$}
  \node(e3)(30,-6){\small $\push$}
  \node(e4)(40,-6){\small $\Push$}
  \node(e5)(50,-6){\small $\push$}
  \node(e6)(60,-6){\small $\Push$}
  \node(e7)(70,-6){\small $\push$}
  \node(e8)(80,-6){\small $\Push$}
  \node(e9)(90,-6){\small $\pop$}
  \node(e10)(100,-6){\small $\marked a$}
  \node(e11)(110,-6){\small $ \#\Pop$}
  \node(e12)(120,-6){\small $ \#\Pop$}
  \node(e13)(130,-6){\small $ \#\Pop$}
  \node(e14)(140,-6){\small $\push$}
  \node(e15)(150,-6){\small $\Push$}
  \node(e16)(160,-6){\small $\pop$}
  \node(e17)(170,-6){\small $\pop$}
  \node(e18)(180,-6){\small $\marked b$}
  \node(e19)(190,-6){\small $ \#\Pop$}
  \node(e20)(200,-6){\small $ \#\Pop$}

  \drawedge[dash={6.0 4.0}0,AHnb=0](n0,n20){}

  \node[Nw=0.1,Nh=0.1](1)(70,8.4){}
  \node[Nw=0.1,Nh=0.1](2)(90,8.4){}
  \drawedge[AHnb=0,linecolor=blue](n7,1){}
  \drawedge[AHnb=0,linecolor=blue](1,2){}
  \drawedge[AHnb=1,linecolor=blue](2,n9){}
  \node[Nw=0.1,Nh=0.1](1)(80,6.4){}
  \node[Nw=0.1,Nh=0.1](2)(110,6.4){}
  \drawedge[AHnb=0,linecolor=red](n8,1){}
  \drawedge[AHnb=0,linecolor=red](1,2){}
  \drawedge[AHnb=1,linecolor=red](2,n11){}
  \node[Nw=0.1,Nh=0.1](1)(60,10.4){}
  \node[Nw=0.1,Nh=0.1](2)(120,10.4){}
  \drawedge[AHnb=0,linecolor=red](n6,1){}
  \drawedge[AHnb=0,linecolor=red](1,2){}
  \drawedge[AHnb=1,linecolor=red](2,n12){}
  \node[Nw=0.1,Nh=0.1](1)(40,12.4){}
  \node[Nw=0.1,Nh=0.1](2)(130,12.4){}
  \drawedge[AHnb=0,linecolor=red](n4,1){}
  \drawedge[AHnb=0,linecolor=red](1,2){}
  \drawedge[AHnb=1,linecolor=red](2,n13){}
  \node[Nw=0.1,Nh=0.1](1)(140,8.4){}
  \node[Nw=0.1,Nh=0.1](2)(160,8.4){}
  \drawedge[AHnb=0,linecolor=blue](n14,1){}
  \drawedge[AHnb=0,linecolor=blue](1,2){}
  \drawedge[AHnb=1,linecolor=blue](2,n16){}
  \node[Nw=0.1,Nh=0.1](1)(30,14.4){}
  \node[Nw=0.1,Nh=0.1](2)(170,14.4){}
  \drawedge[AHnb=0,linecolor=blue](n3,1){}
  \drawedge[AHnb=0,linecolor=blue](1,2){}
  \drawedge[AHnb=1,linecolor=blue](2,n17){}
  \node[Nw=0.1,Nh=0.1](1)(150,6.4){}
  \node[Nw=0.1,Nh=0.1](2)(190,6.4){}
  \drawedge[AHnb=0,linecolor=red](n15,1){}
  \drawedge[AHnb=0,linecolor=red](1,2){}
  \drawedge[AHnb=1,linecolor=red](2,n19){}
  \node[Nw=0.1,Nh=0.1](1)(10,16.4){}
  \node[Nw=0.1,Nh=0.1](2)(200,16.4){}
  \drawedge[AHnb=0,linecolor=red](n1,1){}
  \drawedge[AHnb=0,linecolor=red](1,2){}
  \drawedge[AHnb=1,linecolor=red](2,n20){}
\end{gpicture}
\caption{A \twonw (top) and its encoding in \twonw without collapse (below).  We
show the labels only on the nodes of interest.  }\label{fig:twonw-translation}
\end{figure}

\subparagraph{\twocpds\ to \twocpds\ without collapse.} Given an \twocpds\
$\hopds$, we construct a new \twocpds\ $\hopds'$ where, for each collapse
transition $t = (\state, a, \collapse, \state')$ there is an extra state
$\state_t$.  Further, instead of the transition $t$ we have the following three
transitions: $(\state, \marked a, \nop, \state_t)$, $(\state_t, \#, \Pop,
\state_t)$, and $(\state_t, \#, \Pop, \state')$.  Notice that for every $\twonw$
in the language of $\hopds$, its encoding without collapse will be in the
language of $\hopds'$.  However, the language of $\hopds'$ may contain spurious
runs where the $(\state_t, \#, \Pop, \state_t)$ is iterated an incorrect number
of times.  These spurious runs will be discarded by adding a precondition to the
specification.

\subparagraph{\PDL to \PDL without collapse.} 
We identify a collapse node by the first node in a block of the form 
$\marked{a}(\#\Pop)^+$.  We
call the positions labelled by symbols other than $\#$ \emph{representative
positions}.  Intuitively, we will be evaluating node formulas only at
representative positions, and path formulas connect a pair of representative
positions.  Checking whether the current node is labelled $a$, would now amount
to checking whether the current node is labelled by $a$ or $\marked a$.
Further, in path formulas, moving to right ($\gonext$) would mean going to the
next representative position in the right.  Similarly for $\goprev$.  The path
formula $\gopopTwo$ would correspond to taking the $\gopopTwo$ edge and moving
left until a representative position is reached.  
Notice that $\gopushTwo$ at a collapse node can non-deterministically choose any
$\Push$ matched to it.  Hence we express this as $(\gonext \conc \test{\#})^\ast
\conc \gopushTwo$.  Notice that this formula also handles the case when the
current node is not of the form $\marked{a}$.  The other modalities remain
unchanged.  This translation can be done in linear time and the size of the
resulting formula is linear in the size of the original formula. 
Translation of a \PDL node formula $\varphi$ and path formula $\pi$ to \PDL
without collapse is given below.  The translation is denoted $\transl{\varphi}$
and $\transl{\pi}$ respectively.

\medskip\noindent\hfil
$\begin{array}{rcl}
\transl{a} &\equiv & a \vee \marked{a}\\
\transl{\varphi_1 \vee \varphi_2} &\equiv & \transl{\varphi_1} \vee \transl{\varphi_2}\\
\transl{\neg \varphi} & \equiv & \neg \transl{\varphi}\\
\transl{\existspath{\pi}\varphi} &\equiv & \existspath{\transl\pi} \transl\varphi\\
\transl{\existsloop{\pi}} &\equiv & \existsloop{\transl\pi}
\end{array}$
\hfil
$\begin{array}{rcl}
\transl{\test \varphi } &\equiv & \test{\transl \varphi}\\
\transl{\pi_1 \conc \pi_2} &\equiv & \transl{\pi_1} \conc \transl{\pi_2}\\
\transl{\pi_1 + \pi_2} &\equiv & \transl{\pi_1} + \transl{\pi_2}\\
\transl{{\pi}^\ast} &\equiv & {\transl\pi}^\ast\\
\transl{\gopopOne} &\equiv & \gopopOne
\end{array}$
\hfil
$\begin{array}{rcl}
\transl{\gopushOne} &\equiv & \gopushOne\\
\transl{\gonext } &\equiv & {\gonext}\conc(\test{\#}\conc{\gonext})^\ast\conc\test{\neg \#}\\
\transl{\goprev} &\equiv & {\goprev}\conc(\test{\#}\conc{\goprev})^\ast\conc\test{\neg \#}\\
\transl{\gopushTwo} &\equiv & ({\gonext} \conc \test{\#})^\ast \conc {\gopushTwo}\\
\transl{\gopopTwo} &\equiv & {\gopopTwo} \conc (\test{\#} \conc {\goprev})^\ast \conc \test{\neg \#}
\end{array}$

\subparagraph{\MSONW to \MSONW without collapse.} Translation of \MSONW is similar in spirit to that of \PDL. Every atomic formula (binary relations and unary predicates) is translated as done in the case of \PDL. The rest is then a standard relativisation to the representative positions. Overloading notations, we denote the translation of an \MSONW formula $\varphi$ by $\transl \varphi$.

\newcommand{\mypushone}{\pi_{\mypush}}
\newcommand{\mypushtwo}{\pi_{\myPush}}

\subparagraph{Identifying valid encodings with \PDL/\MSONW. } First we need to
express that the number of $\#\Pop$ is correct.  Towards this, we will state
that, the matching push of the last pop in a $\marked{a}(\#\Pop)^+$ block indeed
corresponds to the push of the stack in which the topmost stack symbol was
pushed.  We explain this below.  From a representative position labelled
$\marked a$ we can move to the $\push$ position $x$ where the topmost stack
symbol of the topmost stack was pushed, by taking a $\mypushone$ path.  The
$\Push$ position $y$ which pushed the stack which contains the element pushed at
$x$ can be reached by a $\mypushtwo$ path from $x$.  We then say that the
matching pop of $y$ is indeed the last $\#\Pop$ labelled node in the sequence,
and that it indeed belongs to the very sequence of $\marked a$ we started with.
We can state this in \PDL with:
$$
\phi_{\mathsf{valid}} =
\forallnodes \bigwedge_{a\in\Alphabet} \left( \marked a \implies
\existsloop {{\mypushone}\conc{\mypushtwo}
\conc {\gopopTwo}\conc \test{\neg \existspath{\gonext}\#} \conc
(\test{\#}\conc{\goprev})^+ } \right) \,.
$$

\subparagraph{Satisfiability and model checking problems.} 
The satisfiability problem of \PDL/\MSONW formula $\phi$ over
\twonw with collapse reduces to the satisfiability problem of
$\phi_{\mathsf{valid}} \wedge \transl{\phi}$ over \twonw without collapse.
The model checking problem of \twocpds $\hopds$ against a \PDL/\MSONW formula
$\phi$ reduces to the model checking problem of $\hopds'$ against
$\phi_{\mathsf{valid}} \implies \transl{\phi}$.

\begin{remark}
  In \cite{AMO05} it is shown that for every $\twocpds$ there exists an order-2
  pushdown system without collapse generating the same word language (without
  nesting relations).  Our result of this section shows that, model checking and
  satisfiability checking, even in the presence of nesting relations, can be
  reduced to the setting without collapse.
\end{remark}

\section{Bounded-pop \twonw}\label{sec:under-approximation}

In this section we will define an under-approximation of \twonws which regains decidability of the verification problems discussed in
Section~\ref{sec:logic}.

\subparagraph{Bounded-pop \twonws} are \twonws in which a pushed symbol may be
popped at most a bounded number of times.  This does not limit the number of
times an order-1 stack may be copied, nor the height of any order-1 or order-2
stack.
Our
restriction amounts to bounding the number of $\NestRelOne$ partners that a push
may have.  Let $\bound$ denote this bound for the rest of the paper.  The class
of \twonw in which every order-1 push has at most $\bound$ many matching pops is
called \emph{$\bound$-pop-bounded order-2 nested words}.  It is denoted $\pbnw$.

\subparagraph{Bounded-pop model checking} The under-approximate satisfiability
problem and model checking problems are defined as expected.  The problem
$\SAT(\PDL,\beta)$ (resp.\ $\SAT(\MSONW,\beta)$) asks, given an $\PDL$ (resp.\
\MSONW) sentence $\phi$ and a natural number $\bound$, whether $\nw \models
\phi$ for some \twonw $\nw\in\pbnw$.  
The problem $\MC(\PDL,\beta)$ (resp.\ $\MC(\MSONW,\beta)$) asks, given an $\PDL$
(resp.\ \MSONW) sentence $\phi$, a \twocpds $\hopds$ and a natural number
$\bound$, whether that $\nw\models\phi$ for all \twonw $\nw \in \Lang(\hopds)
\cap \pbnw$.

\begin{theorem}\label{thm:main}
  The problems $\SAT(\PDL,\beta)$ and $\MC(\PDL,\beta)$ are \textsc{ExpTime}-Complete. 
  The problems $\SAT(\MSONW,\beta)$ and $\MC(\MSONW,\beta)$ are decidable.
\end{theorem}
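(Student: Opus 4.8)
The plan is to obtain decidability by interpreting bounded-pop \twonws over trees, following the split-width methodology. Since Section~\ref{sec:no-collapse} reduces both the satisfiability and the model-checking problems to the collapse-free setting, I would first assume throughout that the \twonws are collapse-free, so that $\NestRelTwo$ is a partial injection while each $\push$ still has at most $\bound$ partners under $\NestRelOne$. I would then lift \emph{split-width} to \twonws: a \emph{split \twonw} is a \twonw in which some successor edges are erased, so that it is a bounded sequence of blocks, and a decomposition uses two operations, (i) \emph{dividing} a split \twonw into two so that every $\NestRelOne$- and $\NestRelTwo$-edge stays inside one part, and (ii) erasing a single successor edge to create one new block; $\splitwidth(\nw)$ is the least $k$ admitting a decomposition in which every intermediate split \twonw has at most $k$ blocks. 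The central combinatorial claim, which I expect to be the main obstacle, is that $\splitwidth(\nw)\le f(\bound)$ for a polynomial $f$ and every $\nw\in\pbnw$: in the associated split game one repeatedly detaches the segment delimited by the first $\NestRelTwo$-matched $\Push$ and its $\Pop$ and peels off the at most $\bound$ order-1 pop-partners of each exposed $\push$; the difficulty is that duplication of order-1 stacks by $\Push$ scatters the partners of a single $\push$ across many copies, and one must argue that these are nonetheless confined to boundedly many blocks.

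Assuming this bound, bounded $\splitwidth$ entails bounded special tree-width, so by \cite{Courcelle10} every $\nw\in\pbnw$ is the value $\sem{t}$ of a special tree-term (an \STT) $t$ over a finite alphabet determined by $f(\bound)$, with the word order and the relations $\NestRelOne,\NestRelTwo$ all MSO-definable from $t$. I would build a tree automaton \Aswk over \STTs accepting exactly the valid width-$k$ split-terms that decode to genuine \twonws; instantiating $k=f(\bound)$ and conjoining the $\bound$-pop condition yields \Apopb, obtained as a product $\Aword\cap\Anesting$ of automata that check respectively that $t$ decodes to a well-formed word and that $\NestRelOne,\NestRelTwo$ are valid and $\bound$-bounded. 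It has size exponential in $\bound$ and recognizes exactly the \STTs evaluating to \twonws in $\pbnw$.

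For model checking I would construct \Ahopds recognizing those \STTs whose decoded \twonw lies in $\Lang(\hopds)$, by interpreting the \MSONW characterization of $\Lang(\hopds)$ from Example~\ref{ex:mso-language-hopds} through the tree; the only delicate point, verifying top-tests, is handled because $\pi_{\mypush}$ is MSO-definable and hence regular over $t$. For the specification I would compile the formula into a tree automaton: in the \PDL case each atomic modality $\gonext,\goprev,\gopopOne,\gopushOne,\gopopTwo,\gopushTwo$ is an MSO-definable, hence regular, binary relation over $t$, so every path formula denotes a regular relation and the whole \PDL sentence translates to a tree automaton \Aphi (resp.\ \Anotphi) of size exponential in $\sizeof{\phi}$; in the \MSONW case I would instead use the tree interpretation together with the classical MSO--tree-automaton correspondence, which gives a (nonelementary but effective) tree automaton.

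Finally, $\SAT(\PDL,\bound)$ reduces to nonemptiness of $\Apopb\cap\Aphi$ and $\MC(\PDL,\bound)$ to emptiness of $\Apopb\cap\Ahopds\cap\Anotphi$ (i.e.\ absence of a counterexample), and the \MSONW analogues are identical, establishing decidability in all four cases. For the complexity of \PDL: \Apopb and \Ahopds have size exponential in $\bound$ and polynomial in $\sizeof{\hopds}$, \Aphi and \Anotphi are exponential in $\sizeof{\phi}$, and nonemptiness of tree automata is decidable in time polynomial in their size, so both \PDL problems lie in \textsc{ExpTime}. For the matching lower bound I would reduce a standard \textsc{ExpTime}-hard problem---e.g.\ acceptance for an alternating polynomial-space Turing machine, or \PDL satisfiability---into $\SAT(\PDL,\bound)$, encoding the required computations as bounded-pop \twonws already for a small fixed $\bound$; this gives \textsc{ExpTime}-hardness and hence \textsc{ExpTime}-completeness for both $\SAT(\PDL,\bound)$ and $\MC(\PDL,\bound)$.
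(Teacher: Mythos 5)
Your architecture is exactly the paper's: reduce to the collapse-free case (Section~\ref{sec:no-collapse}), bound the split-width of $\pbnw$, pass to special tree terms via Courcelle's special tree-width, build the tree automata $\Apopb=\Aword\cap\Anesting$, \Ahopds and \Aphi, decide $\SAT$/$\MC$ by (non)emptiness of the products $\Apopb\cap\Aphi$ and $\Apopb\cap\Ahopds\cap\Anotphi$, and get hardness from nested-word automata embedded in \twocpds. However, the central combinatorial step is assumed rather than proved. You posit $\splitwidth(\nw)\le f(\bound)$ and sketch only ``peel off the first matched $\Push$/$\Pop$ segment, then the $\le\bound$ pop-partners of each exposed $\push$.'' The paper proves the concrete bound $2\bound+2$ (Theorem~\ref{thm:swofpopbounded}), and the proof needs machinery beyond the peeling idea: it maintains the invariant that the split-words produced are \emph{good} (closable into a valid \twonw), and it controls exactly the scattering you identify via context-pops, context-suffixes, and the \emph{existence of covering pop} property, which guarantees that after detaching the context-suffixes of the pops $y_1,\ldots,y_m$ of the leading $\push$, every previously created hole lies inside some removed $\cs(y_i)$, so the width collapses back to $m+1$ and the induction can continue. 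Without an argument of this kind the naive peeling does not obviously stay within bounded width, so this is a genuine gap rather than a routine omission.

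A second, smaller gap concerns the \textsc{ExpTime} upper bound for \PDL. Deriving \Aphi by observing that the atomic modalities are ``MSO-definable, hence regular'' justifies decidability (and suffices for the \MSONW claims) but does not yield an automaton of size $2^{\poly(\bound,\sizeof{\phi})}$ --- the MSO route is nonelementary. The paper instead compiles $\phi$ into an alternating two-way tree automaton of size $\poly(\bound,\sizeof{\phi})$ using walking automata for the edge relations over the specific \STTs accepted by \Apopb, and crucially invokes the result of \cite{Goeller2009} on PDL with intersection and converse to handle $\existsloop{\pi}$: loop is an intersection of bounded intersection-width, which keeps the alternating automaton polynomial before the single exponential conversion to a normal tree automaton. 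Your complexity claim needs this (or an equivalent treatment of loop) to go through. The hardness direction is fine; the paper's reduction from PDL over ordinary nested words, simulated by \twocpds using only the order-2 stack, is a clean instance of what you propose.
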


\twocpds can simulate nested-word automata (NWA) \cite{AlurM09} by not
using any order-1 stack operations.  $\Push$ and $\Pop$ will play the role of
push and pop of NWA. Satisfiability of PDL over nested words, and model checking
of PDL against NWA are known to be \textsc{ExpTime}-Complete
\cite{BCGZ-jal14,BCGZ-mfcs11}.  The \textsc{ExpTime}-hardness of
$\SAT(\PDL,\beta)$ and $\MC(\PDL,\beta)$ follows.
The decision procedures establishing Theorem~\ref{thm:main} are given in the
next section. Thanks to Section~\ref{sec:no-collapse} we will restrict our 
attention to $\twonw$ without collapse.

\section{Split-width and decision procedures}\label{sec:split-width}

In all of this section, by \twonws we mean order-2 nested words \emph{without
collapse}.  We lift the notion of split-decomposition and split-width to \twonws
and show that words in $\pbnw$ have split-width bounded by $2\beta+2$.  Then we
show that nested words in $\pbnw$ can be interpreted in binary trees, which is
the core of our decision procedures.

\subparagraph{A Split-\twonw} 
is a \twonw in which the underlying word has been split in several factors.
Formally, a split-\twonw is a tuple
$\snw=\tuple{u_1,\ldots,u_m,\NestRelOne,\NestRelTwo}$ such that
$\nw=\tuple{u_1\cdots u_m,\NestRelOne,\NestRelTwo}$ is a \twonw.
The number $m$ of factors in a split-\twonw is called it's width.
A \twonw is a split-\twonw of width one.

A split-\twonw can be seen as a labelled graph whose vertices are the
positions of the underlying word (concatenation of the factors) and we have order-1 edges $\NestRelOne$, 
order-2 edges $\NestRelTwo$ and successor edges $\gonext$ between consecutive 
positions \emph{within} a factor. 
We say that a split-\twonw is connected if the underlying graph is \emph{connected}. 
If a split-\twonw is not connected, then its connected components form a
partition of its factors.

\smallskip
\noindent
\begin{minipage}{.7\textwidth}
\begin{example} 
  Consider the split-\twonw on the right.
 It has two connected components, and its width is five.
\end{example}
\end{minipage}
\begin{minipage}{.3\textwidth}
 \hfill\includegraphics[scale=.6,page=6]{tikz-pics}.
\end{minipage}

\subparagraph{The split game} 
is a two-player turn based game $\mathcal{G} = \tuple{V = V_\forall \uplus V_\exists, E}$ where 
\Eve's positions $V_\exists$ consists of (connected or not)
split-\twonws, and
\Adam's positions $V_\forall$ consists of non connected split-\twonws.
The edges $E$ of $\mathcal{G}$ reflect the moves of the players.
\Eve's moves consist in removing some successor 
edges in the graph, i.e., splitting some factors, so that the resulting graph 
is not connected.
\Adam's moves amounts to choosing a connected component.
A split-\twonw is atomic if it is connected and all
its factors are singletons.  An atomic split-\twonw contains either a single
internal event, or, a single push with all its corresponding pops. 
A play on a split-\twonw $\snw$ is path in $\mathcal{G}$ starting from $\snw$ to
an atomic split-\twonw.  The cost of the play is the maximum width of any
split-\twonw encountered in the path.  \Eve's objective is to minimize the cost
and \Adam's objective is to maximize the cost.

A strategy for \Eve from a split-\twonw $\snw$ can be described with a
\emph{split-tree} $T$ which is a binary tree labelled with split-\twonw 
satisfying:

\noindent
\begin{minipage}{.6\textwidth}
  \begin{enumerate}[nosep]
    \item The root is labelled by $\snw$ and
    leaves are labelled by atomic split-\twonw.

    \item \Eve's move: Each unary node is labelled with some 
    split-\twonw $\snw$ and its child is labelled with $\overline{\nw'}$ obtained 
    by splitting some factors of $\snw$.

    \item \Adam's move: Each binary node is labelled with some non connected
    split-\twonw $\snw = \snw_1 \uplus \snw_2$ where 
    $\snw_1$, $\snw_2$ are the
    labels of its children.  Note that 
    $\width(\snw)=\width(\snw_1)+\width(\snw_2)$.
  \end{enumerate}
  The \emph{width} of a split-tree $T$, denoted $\width(T)$, is the maximum width
  of the split-\twonws labelling $T$.
  The \emph{split-width} of a split-\twonw $\snw$ is the minimal width of all split-trees for $\snw$. 
  A split-tree is depicted
  above. 
  The width of the split-\twonw labelling binary nodes are five.  Hence the
  split-width of the split-\twonw labelling the root is at most five.
\end{minipage}
\begin{minipage}{.4\textwidth}
  \hfill\includegraphics[scale=.52,page=3]{tikz-pics}
\end{minipage}

\begin{theorem}\label{thm:swofpopbounded}
  Nested words in $\pbnw$ have split-width bounded by $k=2\bound+2$.
\end{theorem}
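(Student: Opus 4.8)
The plan is to exhibit, for every \twonw $\nw\in\pbnw$, a split-tree of width at most $k=2\bound+2$; equivalently, a winning strategy for \Eve in the split game that never produces a split-\twonw with more than $2\bound+2$ factors. The decomposition I would use is a \emph{peeling} strategy, and the central structural invariant is that every split-\twonw met along a play (the root $\nw$ together with every connected component \Adam hands back) is \emph{nesting-closed}: whenever an endpoint of a $\NestRelOne$- or $\NestRelTwo$-edge belongs to a piece, so does the other endpoint. This will be automatic from the way \Eve peels, and it guarantees that \Adam's components are genuine smaller sub-\twonws with no dangling matching edge, so that the recursion is well defined.

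\Eve's strategy on a connected, nesting-closed piece $P$ is as follows. If $P$ is atomic, stop. If $P$ contains at least one $\push$, let $p$ be the \emph{leftmost} order-1 push of $P$ and let $j_1<\dots<j_r$ (with $r\le\bound$) be all its matching pops; \Eve removes the successor edges immediately surrounding $p,j_1,\dots,j_r$, turning these $r+1$ positions into singleton factors. Since $P$ is nesting-closed and we detach $p$ together with \emph{all} of its pops, the resulting split-\twonw is disconnected: one component is exactly $\{p,j_1,\dots,j_r\}$, which is atomic of width $r+1\le\bound+1$, and every remaining component is again nesting-closed. If $P$ contains no $\push$, then $P$ carries only a well-nested order-2 matching ($\Push/\Pop$) and internal events; here \Eve peels the leftmost $\Push$ with its unique $\Pop$ (or an isolated internal/unmatched event), exactly as in the constant split-width decomposition of ordinary nested words, using at most two extra factors. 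Termination is clear since each peel strictly decreases the number of positions, so every maximal play ends at an atom and the atoms recover $\nw$.

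The crux, and the step I expect to be the main obstacle, is the quantitative claim that this strategy stays within the bound. I would prove the sharper invariant that \emph{every piece reachable in a play has at most $\bound+1$ factors}, and that a single peel of the leftmost push from such a piece uses at most $2\bound+2$ factors transiently. Writing $r\le\bound$ for the number of pops of the peeled push $p$, the transient width is $(r+1)$ (the detached atom) plus the number of surviving fragments, so everything reduces to bounding how many new factor boundaries the pops of $p$ introduce. This is where the stack discipline must be used: because $p$ is the earliest, hence bottom-most, order-1 push still present in $P$, in every duplicated copy of the current top stack the symbol pushed at $p$ is popped \emph{last} among the pushes still present, so each pop $j_i$ lies against a factor boundary (a gap) already opened when a shallower push was peeled earlier. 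This is precisely the alignment visible in the half-grid generated by $\hopds_2$, where the pops of push $0$ fall at the right ends of the rows while the pops of push $4$ land immediately to their left. I would make this rigorous with the grammar of Proposition~\ref{prop:mypush}: the intervals of $P$ lying between consecutive pops of $p$ are complete $S_1$- or $S_2$-derivations, so detaching a pop either reuses an existing boundary or merges into an adjacent fragment, and a careful count (the delicate point) shows the surviving fragments number at most $2\bound+1-r$. This gives transient width $(r+1)+(2\bound+1-r)=2\bound+2$, and after \Adam selects a component the factor count drops back to at most $\bound+1$, re-establishing the invariant and closing the induction; the leaves being atomic then yields a split-tree of width at most $2\bound+2$, so $\splitwidth(\nw)\le 2\bound+2$.
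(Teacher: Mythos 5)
Your overall shape---an inductive peeling strategy for \Eve with a closure invariant on the pieces---is the right kind of argument and is in the spirit of the paper's proof, but the specific peel you propose does not meet the bound, and the ``careful count (the delicate point)'' you defer is exactly where it breaks. Concretely, take the valid sequence $\Push\,\pushs{a}\,\pushs{b}\,\pop\,\pop\,\Pop$ on positions $0,\dots,5$, so that $1\NestRelOne 4$, $2\NestRelOne 3$ and $0\NestRelTwo 5$. Every $\push$ has exactly one matching pop, so this \twonw is in $\pbnw$ for $\bound=1$ and the theorem promises split-width at most $4$. Your first move detaches the leftmost push $p=1$ together with its pop $4$, producing the factors $[0],[1],[2,3],[4],[5]$, a split-\twonw of width $5>2\bound+2$; the cost of the play already exceeds the bound. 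The alignment you invoke to control the count (``each pop $j_i$ lies against a factor boundary already opened when a shallower push was peeled earlier'') is false and in fact points the wrong way: since $p$ is the bottom-most surviving push, its pops are popped \emph{last} within each stack copy, so $j_i$ is typically followed by the $\Pop$ closing that copy (here, position $5$), not by a pre-existing gap; and $p$ itself need not sit at the left end of its factor (here it is preceded by the $\Push$ at $0$). In the worst case such a peel creates $2(r+1)$ new boundaries, i.e.\ transient width up to $3\bound+3$.

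The paper avoids this with two refinements you are missing. First, a piece beginning with a $\Push$ is handled separately: the outermost $\Push/\Pop$ pair is peeled off first (cheaply, since no $\NestRelOne$ or $\NestRelTwo$ edge crosses it), so the $\push$-case only ever applies to pieces that \emph{begin} with the push about to be removed. Second, and crucially, before isolating a leading push $p$ with pops $y_1<\dots<y_m$, the paper first carves out, one at a time, the context-suffix of each $y_i$ (the segment strictly between $y_i$ and the order-2 pop of the innermost $\NestRelTwo$ edge enclosing it); each such segment is disconnected from the rest and is itself a ``good'' piece handled recursively. Only after this preprocessing is every $y_i$ flush against a gap on its right, so isolating $p,y_1,\dots,y_m$ costs only $m+1$ additional cuts and the width stays within $2m+2\le 2\bound+2$. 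Relatedly, your invariant that every reachable piece has at most $\bound+1$ factors is asserted but not supported: the paper derives it from the stronger ``goodness'' invariant (the gaps of a piece are exactly the pop positions of a single removed push, so closing them yields a valid \twonw) together with the ``existence of covering pop'' property, whereas your weaker nesting-closure invariant does not by itself bound the number of factors of a component.
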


\newcommand{\close}[1]{\textsf{Close}(#1)}
\newcommand{\cp}{\textsf{ctxt-pop}}
\newcommand{\cs}{\textsf{ctxt-suffix}}

\begin{proof}
First, we say that a split-word $\overline{\nw} 
=\tuple{u_0,u_1,\ldots,u_m,\NestRelOne,\NestRelTwo} $ 
with $m \le \bound$ is
\emph{good} if $\close{\overline{\nw}}$ = \raisebox{-3pt}{\tikzclose}
is a valid \twonw 
(nesting edges between the factors are not depicted).
Notice that any \twonw $\nw=\close{\nw}$ is vacuously good.

Our strategy is to decompose good split-words into atomic split-words or smaller
good split-words so that we can proceed inductively.  Good split-words have
width at most $\bound+1$.  On decomposing a good split-word to obtain smaller good
split-words, we may temporarily generate (not necessarily good) split-words of
higher width, but we never exceed $k = 2\bound +2$.

\newcommand{\tikzcaseone}{\includegraphics[scale=1,page=16]{tikz-pics}}
\newcommand{\tikzcaseoneone}{\includegraphics[scale=1,page=17]{tikz-pics}}
\newcommand{\tikzcaseonetwo}{\includegraphics[scale=1,page=18]{tikz-pics}}
\newcommand{\tikzcaseonethree}{\includegraphics[scale=1,page=19]{tikz-pics}}
\newcommand{\tikzcaseonefour}{\includegraphics[scale=1,page=20]{tikz-pics}}

Consider any good split-word $\overline{\nw} $.  We have two cases: either it
begins with a $\push$, or it begins with a $\Push$.

If $\overline{\nw} $ begins with a $\Push$: $\overline{\nw} = $
\raisebox{-3pt}{\tikzcaseone}.
We split factors $u_0$ and $u_i$ to get $\overline{\nw'}$ of width at most $m+4$. \\
$\overline{\nw'}=$
\raisebox{-3pt}{\tikzcaseoneone}
Note that, there cannot be any
$\NestRelOne$ edges from $u'_0,\ldots,u^1_i$ to $u^2_i,\ldots,u_m$ since the
duplicated stack is lost at the $\Pop$.  Further, there cannot be any
$\NestRelTwo$ edges from $u'_0,\ldots,u^1_i$ to $u^2_i,\ldots,u_m$ since
$\NestRelTwo$ is well-nested.  Hence, $\overline{\nw'}$ can be divided into
atomic 
\raisebox{-3pt}{\tikzcaseonefour} 
and split-words $\overline{\nw_1} = $
\raisebox{-3pt}{\tikzcaseonetwo}
and $\overline{\nw_2} = $ 
\raisebox{-3pt}{\tikzcaseonethree}.
Since
$\overline{\nw}$ is good, we can prove that $\overline{\nw_1} $ and
$\overline{\nw_2} $ are also good.
 
\medskip%
Before moving to the more involved case where $\overline{\nw} $ begins with a
$\push$, we first see a couple of properties of \twonw which become handy.
Examples are given in Appendix~\ref{sec:context-pop}.

The \emph{context-pop} of an order-1 pop at position $x$, denoted $\cp(x)$, is
the position of the order-2 pop of the innermost $\NestRelTwo$ edge that
encloses $x$.  If $z \NestRelTwo y$ and $z < x < y$ and there is no $z'
\NestRelTwo y'$ with $z < z' < x < y' < y$, then $\cp(x) = y$.  If $x$ does not
have a context-pop, then it is \emph{top-level}.

Consider any \twonw $\tuple{a_1a_2w, \NestRelOne, \NestRelTwo}$ beginning with
two \mbox{order-1} pushes with corresponding pops at positions $y_1^1, \ldots,
y_m^1$ and $y_1^2, \ldots, y_n^2$ respectively.  Then for each $y_i^1$, there is
a $y_j^2$ with $y_j^2 < y_i^1$ such that either $\cp(y_i^1) \le \cp(y_j^2)$ or
$y_j^2$ is top-level.
We call this property \emph{existence of covering pop} for later reference.

Consider any \twonw $\tuple{aw, \NestRelOne, \NestRelTwo}$ beginning with an
order-1 push with corresponding pops at positions $y_1, \ldots , y_m$.  The
\emph{context-suffix} of $y_i$, denoted by $\cs(y_i)$ is the factor of $w$ 
strictly between
$y_i$ and $\cp(y_i)$, both $y_i$ and $\cp(y_i)$ not included.  If $y_i$ is
top-level, then $\cs(y_i)$ is the biggest suffix of $w$ not including $y_i$.
We can prove that, for each $i$, $\cs(y_i)$ is not connected to
the remaining of $w$ via $\NestRelOne$ or $\NestRelTwo$ edges.  The notion of
$\cs(y_i)$ may be lifted to split-words $\overline{\nw} = \tuple{au_0, u_1, u_2,
\ldots, u_n, \NestRelOne, \NestRelTwo}$ also.  In this case, $\cs(y_i)$ may
contain several factors.  Still $\cs(y_i)$ is not connected to the remaining
factors.  Moreover, if $\overline{\nw}$ is good, so is $\cs(y_i)$ .

\medskip%
Now we are ready to describe the decomposition for the second case where the
good split-word $\overline{\nw} $ begins with a $\push$.  Let
$\overline{\nw} = \tuple{au_0, u_1, u_2, \ldots u_n, \NestRelOne, \NestRelTwo}$
beginning with an order-1 push with corresponding pops at positions $y_1, \ldots
, y_m$.  Note that $n, m \le \bound$.  We proceed as follows.  We split at most
two factors of $\overline{\nw}=\overline{\nw_m}$ to get $\overline{\nw_m'}$ in
which $\cs(y_m)$ is a (collection of) factor(s).  Recall that $\cs(y_m)$ is not
connected to other factors.  Hence we divide the split-word $\overline{\nw_m'}$
to get $\cs(y_m)$ as a split-word and the remaining as another split-word
$\overline{\nw_{m-1}}$.  Note that $\cs(y_m)$ is good, so we can inductively
decompose it.  We proceed with $\overline{\nw_{m-1}}$ (which needs not be good).  We
split at most two factors of $\overline{\nw_{m-1}}$ to get $\overline{\nw_{m-1}'}$
in which $\cs(y_{m-1})$ is a (collection of) factor(s).  We divide
$\overline{\nw_{m-1}'}$ to get $\cs(y_{m-1})$ which is good, and
$\overline{\nw_{m-2}}$.  We proceed similarly on $\overline{\nw_{m-2}}$ until we
get $\overline{\nw_0}$.

Note that the width of $\overline{\nw_i}$ (resp.\ $\overline{\nw_i'}$) is at
most $n+1+m-i$ (resp.\ $n+1+m-i+2$).  Hence the width of this stretch of
decomposition is bounded by $n + m + 2$.  Since $n, m \le \bound$, the bound $k
= 2 \bound +2$ of split-width is not exceeded.

Now we argue that the width of $\overline{\nw_1}$ is at most $m+1$.  This is where
we need the invariant of being good.  Consider $\nw =
\close{\overline{\nw}}$, which is a \twonw since $\overline{\nw}$ is good.
Now, $\nw$ starts with two order-1 pushes. Using the existence of covering pop
property, we deduce that every split/hole in $\overline{\nw}$ must
belong to some $\cs(y_i)$.  Hence, all splits from $\overline{\nw}$ are removed
in $\overline{\nw_0}$, and only $m$ splits corresponding to the removed
$\cs(y_i)$ persist.  

We proceed with $\overline{\nw_0}$.  We make at most $m+1$ new splits to get
$\overline{\nw'}$ in which the first push and its pops at positions
$y_1,\ldots,y_m$ are singleton factors.  The width of $\overline{\nw'}$ is at
most $2m +2\leq 2 \bound +2$.  Then we divide $\overline{\nw'}$ to get atomic
split-word consisting of the first push and its pops, and another split-word
$\overline{\nw''}$.  The width of $\overline{\nw''}$ is at most $m+1\le\bound+1$.
Further $\overline{\nw''}$ is good. 
By induction $\overline{\nw''}$ can also be decomposed.
\end{proof}

In Appendix~\ref{sec:tree-decision}, we show that \twonws of split-width at most $k$ have special
tree-width (\STW) at most $2k$.  We deduce that \twonws of bounded split-width can
be interpreted in special tree terms (\STTs), which are binary trees denoting
graphs of bounded STW.
Special tree-width and special tree terms were introduced by Courcelle in 
\cite{Courcelle10}.

A crucial step towards our decision procedures is then to construct a tree
automaton \Aswk which accepts special tree terms denoting graphs that are \twonw
of split-width at most $k$.  The main difficulty is to make sure that the edge
relations $\NestRelOne$ and $\NestRelTwo$ of the graph are well nested.  To
achieve this with a tree automaton of size $2^{\poly(k)}$, we use the
characterization given by the \PDL formula $\phi_\mathsf{wn}$ of
Example~\ref{ex:matchrel-consitent}.  Similarly, we can construct a tree
automaton \Apopb of size $2^{\poly(\bound)}$ accepting \STTs denoting nested
words in $\pbnw$.

Next, we show that for each \twocpds $\hopds$ we can construct a tree automaton
\Ahopds of size $2^{\poly(\bound,\sizeof{\hopds})}$ accepting \STTs denoting
nested words in $\pbnw$ which are accepted by $\hopds$.  We deduce that
non-emptiness checking of \twocpds with respect to $\pbnw$ is in
\textsc{ExpTime}.

Finally, for each \PDL formula $\phi$ we can construct a tree automaton \Aphi of
size $2^{\poly(\bound,\sizeof{\phi})}$ accepting \STTs denoting nested words in $\pbnw$
which satisfy $\phi$. We deduce that $\SAT(\PDL,\beta)$ and $\MC(\PDL,\beta)$ 
can be solved in \textsc{ExpTime}.

Similarly, for each \MSONW formula $\phi$ we can construct a tree automaton
\Aphi accepting \STTs denoting nested words in $\pbnw$ which satisfy $\phi$.  We
deduce that $\SAT(\MSONW,\beta)$ and $\MC(\MSONW,\beta)$ are decidable.

\section{Related work}\label{sec:comparison}
In \cite{Broadbent12}, Broadbent studies nested structures of order-2 HOPDS. A
suffix rewrite system that rewrites nested words is used to capture the graph of
$\epsilon$-closure of an order-2 HOPDS. The objective of the paper as well as
the use of nested words is different from ours.

\paragraph{Nested trees.} {\twonw}s have close relation with nested trees. A nested tree \cite{AlurCM06,AlurCM11}
is a tree with an additional binary relation such that every branch forms a
well-nested word\cite{AlurM09}.  It provides a ``visible''
representation of the branching behaviour of a pushdown system.  

Every finite nested tree can be embedded inside a \twonw without collapse.  In
our encoding, order-1 matching relation captures the nesting relation in the
nested tree, and order-2 matching relation captures the branching structure.
See the example below:\\
\gusepicture{gpic:nested-tree}\hfill\gusepicture{gpic:tree-to-nw}

(Encoding of) every left sub-tree is enclosed within a 
$\bullet\NestRelTwo\bullet$ pair, whose source and target nodes are not part of the original nested-tree.  There is a bijective correspondence between the other nodes in the \twonw and nodes in the nested-tree, and the edges in the nested-tree can be easily MSO-interpreted in the \twonw.  This implies that MSO over \twonw is undecidable, since it
is undecidable over nested-trees \cite{AlurCM11}.

Conversely, every \twonw can be MSO interpreted in a nested-tree.  We may assume
that $\twonw$ is collapse-free (cf.  Section~\ref{sec:no-collapse}).  The $\Pop$
nodes are linked in the tree as the right child of the matching push, and the
other nodes are linked as the left child of its predecessor.

\medskip\noindent\hfil
\includegraphics[scale=.8,page=9]{tikz-pics}
\hfil
\includegraphics[scale=1,page=10]{tikz-pics}

On nested trees, $\mu$-calculus was shown to be decidable in \cite{AlurCM06}. First-order logic over nested trees, when the
signature does not contain the order $<$ (but only the successor $\gonext$) is shown decidable in 
\cite{kartzow13}, but MSO over nested trees is undecidable \cite{AlurCM06}.  
Our under-approximation of bounded-pop which regains decidability
corresponds to bounding the number of matching pops that a tree-node can have,
which in turn bounds the degree of the nodes in the tree.

\section{Conclusion}
In this paper we study the linear behaviour of a \twocpds by giving extra structure to words. The specification formalisms can make use of this structure to describe properties of the system. This added structure comes with the cost of undecidable verification problems. We identify an under approximation that  regains decidability for verification problems. Our decision procedure makes use of the split-width technique.

This work is a first step towards further questions that must be investigated.
One direction would be to identify other under-approximations which are
orthogonal / more lenient than bounded-pop for decidability.  Whether similar
results can be obtained for order-n CPDS is also another interesting future
work.  The language theory of CPDS where the language consists of nested-word
like structures is another topic of interest.

\clearpage


\clearpage

\appendix

\section{Characterisation of the matching relations}\label{app:characterisation}

\rstpropmypush*

\begin{proof}
  Notice that \eqref{item:P3} and \eqref{item:P4} follow directely from 
  \eqref{item:P1} and \eqref{item:P2}.
  We prove \eqref{item:P1} and \eqref{item:P2} simultaneously by induction on
  $j$ and by a case splitting depending on $\op_j$.
  \begin{itemize}
    \item Assume $\op_j=\push$. 
    
    \eqref{item:P1} We have, $\mypush(j)=j$. Moreover, 
    $\op_i\cdots\op_j\in L_1$ iff $i=j$.
    
    \eqref{item:P2} We have $\op_i\cdots\op_j\in L_2$ 
    iff $j>i$ and $\op_i\cdots\op_{j-1}\in L_2$ 
    iff $j>i$ and (by induction) $\myPush(j-1)=i$ 
    iff $\myPush(j)=i$.
  
    \item Assume $\op_j=\Push$. 
    
    \eqref{item:P1} We have $\op_i\cdots\op_j\in L_1$ 
    iff $j>i$ and $\op_i\cdots\op_{j-1}\in L_1$ 
    iff $j>i$ and (by induction) $\mypush(j-1)=i$ 
    iff $\mypush(j)=i$.
    
    \eqref{item:P2} We have, $\myPush(j)=j$. Moreover, 
    $\op_i\cdots\op_j\in L_2$ iff $i=j$.

    \item Assume $\op_j=\pop$. 
  
    \eqref{item:P1} We have $\op_i\cdots\op_j\in L_1$ 
    iff $\op_i\cdots\op_{k-1}\in L_1$ and $\op_k\cdots\op_{j-1}\in L_1$ for some $i<k<j$ 
    iff (by induction) $\mypush(k-1)=i$ and $\mypush(j-1)=k$ for some $i<k<j$ 
    iff $i=\mypush(\mypush(j-1)-1)$ 
    iff $\mypush(j)=i$.

    Let us explain the last equivalence.  The symbol which is popped by
    $\op_j=\pop$ was the top symbol at $j-1$, which was pushed at
    $\mypush(j-1)=k$.  We deduce that the top symbol after $\op_j=\pop$ is the
    top symbol before $\op_k=\push$.  Therefore, $\mypush(j)=\mypush(k-1)$.
    
    \eqref{item:P2}  We have $\op_i\cdots\op_j\in L_2$ 
    iff $j>i$ and $\op_i\cdots\op_{j-1}\in L_2$ 
    iff $j>i$ and (by induction) $\myPush(j-1)=i$ 
    iff $\myPush(j)=i$.

    \item Assume $\op_j=\Pop$.
  
    \eqref{item:P1} We have $\op_i\cdots\op_j\in L_1$ 
    iff $\op_i\cdots\op_{k-1}\in L_1$ and $\op_k\cdots\op_{j-1}\in L_2$ for some $i<k<j$ 
    iff (by induction) $\mypush(k-1)=i$ and $\myPush(j-1)=k$ for some $i<k<j$ 
    iff $i=\mypush(\myPush(j-1)-1)$
    iff $\mypush(j)=i$.

    Let us explain the last equivalence.  The order-1 stack which is popped by
    $\op_j=\Pop$ was the top order-1 stack at $j-1$, which was pushed at
    $\myPush(j-1)=k$.  We deduce that the top order-1 stack after $\op_j=\Pop$
    is the top order-1 stack before $\op_k=\Push$.  Hence, the top symbol after
    $\op_j=\pop$ is the top symbol before $\op_k=\push$.  Therefore,
    $\mypush(j)=\mypush(k-1)$.
    
    \eqref{item:P2} The proof is obtained mutatis mutandis from the proof of
    \eqref{item:P1} above. 

    \item Assume $\op_j=\collapse$.

    \eqref{item:P1} We have $\op_i\cdots\op_j\in L_1$ iff
    $\op_i\cdots\op_{k-1}\in L_1$ and $\op_k\cdots\op_{\ell-1}\in L_2$ and
    $\op_\ell\cdots\op_{j-1}\in L_1$ for some $i<k<\ell<j$ iff (by induction)
    $\mypush(k-1)=i$ and $\myPush(\ell-1)=k$ and $\mypush(j-1)=\ell$ for some 
    $i<k<\ell<j$ iff
    $i=\mypush(\myPush(\mypush(j-1)-1)-1)$ iff $i=\mypush(j)$.

    Let us explain the last equivalence.  
    The collapse operation depends on the top symbol before $\op_j=\collapse$.
    This symbol was pushed at $\mypush(j-1)=\ell$. The collapse link which was 
    created by $\op_\ell=\push$ points to the order-1 stack just below the top 
    order-1 stack before (or after) $\op_\ell$. This top order-1 stack was 
    pushed at $k=\myPush(\ell-1)=\myPush(\ell)$. Now, the order-2 stack after 
    $\op_j=\collapse$ is exactely the order-2 stack before $\op_k=\Push$.
    Therefore, $\mypush(j)=\mypush(k-1)$.
    
    \eqref{item:P2} The proof is obtained mutatis mutandis from the proof of
    \eqref{item:P1} above.
    \qedhere
\end{itemize}
  
\end{proof}

\section{Infinite half grid in the configuration graph of $\hopds_1$  }\label{app:grid-configuration-graph}

\hfil\includegraphics[scale=1,page=23]{tikz-pics}

\section{Example showing context-pop and context-suffix}
\label{sec:context-pop}

Consider the \twonw below. \\[1ex]
\noindent
\includegraphics[scale=1,page=11]{tikz-pics}

\medskip
\noindent We have, 

\begin{tabular}{p{4.6cm} l}
$\cp(3)=12$ & $\cp(5)=\cp(7)=8$ \\
 $\cp(10)=11$ & $\cp(17) = 18$\\
 \multicolumn{2}{l}{$\cp(14)=\cp(19) = \cp(20) = 21$} 
\end{tabular} 

\noindent Node $0$ and node $1$ are top-level $\push$.

\medskip
\noindent
Further, $\cs(5)={}$
\includegraphics[scale=.65,page=12]{tikz-pics} and
$\cs(10) = \cs(20) = \epsilon$. \\
Notice that the context-suffixes of 3 and 14 are connected to position 0.
This is because the corresponding $\push$ at position 1 is not the left-most.

\medskip
\noindent
Now, if we remove the $\push$ at position 0, its corresponding pops 5, 10, 20 
and their context-suffixes we obtain 

\medskip
\noindent
\includegraphics[scale=1,page=13]{tikz-pics}
 
\medskip
\noindent Then $\cs(3) = {}$
\includegraphics[scale=.65,page=14]{tikz-pics}
and $\cs(14) = {}$
\includegraphics[scale=.65,page=15]{tikz-pics} are not connected to the rest of 
the \twonw.

\clearpage
\section{Tree interpretation and decision procedures}
\label{sec:tree-decision}

In this section, we show that \twonws of split-width at most $k$ have special
tree-width (\STW) at most $2k$.  We deduce that \twonws of bounded split-width can
be interpreted in special tree terms (\STTs), which are binary trees denoting
graphs of bounded STW.
Special tree-width and special tree terms were introduced by Courcelle in 
\cite{Courcelle10}.

A crucial step towards our decision procedures is then to construct a tree
automaton \Aswk which accepts special tree terms denoting graphs that are \twonw
of split-width at most $k$.  The main difficulty is to make sure that the edge
relations $\NestRelOne$ and $\NestRelTwo$ of the graph are well nested.  To
achieve this with a tree automaton of size $2^{\poly(k)}$, we use the
characterization given by the \PDL formula $\phi_\mathsf{wn}$ of
Example~\ref{ex:matchrel-consitent}.  Similarly, we can construct a tree
automaton \Apopb of size $2^{\poly(\bound)}$ accepting \STTs denoting nested
words in $\pbnw$.

\subparagraph{Special tree terms} form an algebra to define graphs.
A $(\Sigma,\Gamma)$-labelled graph is a tuple
$G=\tuple{V,(E_\gamma)_{\gamma\in\Gamma},\lambda}$ where $\lambda\colon V\to\Sigma$ 
is the vertex labelling and $E_\gamma \subseteq V^2$ is the set of edges for 
each label $\gamma\in\Gamma$. For \twonw, we have three types of edges, so 
$\Gamma=\{\gonext,\NestRelOne,\NestRelTwo\}$. The syntax of \kSTTs 
over $(\Sigma,\Gamma)$ is given by
$$
\stt = (i,a) \mid \add{i}{j}{\gamma} \stt \mid \forget{i} \stt \mid
\rename{i}{j} \stt \mid \stt \sttunion \stt 
$$
where $a \in \Sigma$, $\gamma \in \Gamma$ and $i,j\in[k]=\{1,\ldots,k\}$ are
colors.  

Each \kSTT represents a colored graph $\sem\stt=(G_\tau,\chi_\tau)$ where
$G_\tau$ is a $(\Sigma,\Gamma)$-labelled graph
and $\chi_\tau\colon [k]\to V_\tau$ is a partial injective function assigning a vertex of
$G_\tau$ to some colors.
$\sem{(i,a)}$ consists of a single $a$-labelled vertex with color $i$.
$\add{i}{j}{\gamma}$ adds a $\gamma$-labelled edge to the vertices colored $i$ 
and $j$ (if such vertices exist). $\forget{i}$ removes color $i$ and 
$\rename{i}{j}$ exchanges the colors $i$ and $j$. Finally, $\sttunion$ 
constructs the disjoing union of the two graphs provided they use different 
colors. This operation is undefined otherwise.
The special tree-width of a graph $G$ is the least $k$ such that 
$G=G_\tau$ for some $(k+1)$-\STT $\stt$.

For instance, atomic split-\twonws are denoted by \STTs of the following form
\begin{itemize}
  \item $(1,a)$ for an internal event labelled $a$,

  \item $\add{1}{2}{\NestRelTwo}((1,a)\sttunion(2,b))$ for an order-2 matching 
  pair, and

  \item
  $\add{1}{2}{\NestRelOne}\cdots\add{1}{p}{\NestRelOne}((1,a_1)\sttunion\cdots\sttunion(p,a_p))$
  for order-1 push with $p-1$ pops.  
\end{itemize}
We call these \STTs \emph{atomic}.

To each split-tree $T$ of width $k$ with root labelled $\snw$, we associate a
$2k$-\STT $\stt$ such that $\sem{\stt}=(\snw,\chi)$ and all endpoints of factors
of $\snw$ have different colors.  Since we have at most $k$ factors, we may use
at most $2k$ colors.  A leaf of $T$ is labelled with an atomic split-\twonw and
we associate the corresponding atomic \STT as defined above.  At a binary node,
assuming that $\tau_\ell$ and $\tau_r$ are the \STTs of the children, we first
define $\tau'_r$ by renaming colors in $\tau_r$ so that colors in $\tau_\ell$
and $\tau'_r$ are disjoint, then we let $\tau=\tau_\ell\sttunion\tau'_r$.  At a
unary node $x$ with child $x'$, some factors of the spilt-\twonw
$\overline{\nw}_x$ are split resulting in the split-\twonw
$\overline{\nw}_{x'}$.  Assume that factor $u$ of $\overline{\nw}_x$ is split in
two factors $u'$ and $u''$ of $\overline{\nw}_{x'}$.  The right and left
endpoints of $u'$ and $u''$ respectively are colored, say by $i$ and $j$, in the
\STT $\tau'$ associated with $x'$.  Then, we add a successor edge
($\add{i}{j}{\gonext}$) and we forget $i$ if $|u'|>1$ and $j$ if $|u''|>1$.
We proceed similarly if a factor of $\overline{\nw}_x$ is split in
more than two factors of $\overline{\nw}_{x'}$, and we iterate for each factor 
of $\overline{\nw}_x$ which is split in $\overline{\nw}_{x'}$.

\begin{restatable}{proposition}{rstpropApopb}\label{prop:Apopb}
  There is a tree automaton \Apopb of size $2^{\poly(\bound)}$ accepting \kSTTs
  ($k=4\bound+4)$ and such that $\pbnw=\{G_\tau\mid 
  \tau\in\mathcal{L}(\Apopb)\}$.
\end{restatable}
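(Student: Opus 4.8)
The plan is to build \Apopb as a product of three tree automata running on \STTs over $4\bound+4$ colours, each enforcing one of the defining properties of a $\bound$-pop-bounded \twonw, and then to establish the two inclusions of $\pbnw=\{G_\tau\mid\tau\in\mathcal{L}(\Apopb)\}$ separately. Concretely I would take $\Apopb=\Aword\cap\Anesting\cap\mathcal{A}_{\mathsf{pop}}$, where \Aword checks that the $\gonext$-edges of $G_\tau$ form a single linear word, \Anesting checks that $\NestRelOne$ and $\NestRelTwo$ are valid and well-nested, and $\mathcal{A}_{\mathsf{pop}}$ checks the $\bound$-pop-bound. Since a product of automata of size $2^{\poly(\bound)}$ again has size $2^{\poly(\bound)}$, the size bound follows as soon as each factor is constructed within that budget.

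For the inclusion $\pbnw\subseteq\{G_\tau\mid\tau\in\mathcal{L}(\Apopb)\}$ (every pop-bounded \twonw is denoted by some accepted term) I would invoke Theorem~\ref{thm:swofpopbounded}: each $\nw\in\pbnw$ has split-width at most $2\bound+2$, hence admits a split-tree of width $2\bound+2$. The translation of split-trees into \STTs described just above the statement turns this split-tree into a $(4\bound+4)$-\STT $\tau$ with $\sem{\tau}=(\nw,\chi)$, so $G_\tau=\nw$. As $\nw$ is genuinely a \twonw satisfying the pop-bound, $\tau$ passes all three checks and is accepted. The reverse inclusion is exactly the soundness of the three factors, treated below.

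The factors \Aword and $\mathcal{A}_{\mathsf{pop}}$ are routine bottom-up constructions. For \Aword I would certify the MSO-definable property ``$\gonext$ is the successor relation of a total order'', i.e.\ that $(V,\gonext)$ is a single directed path covering every vertex; the state tracks, for each of the $\le 4\bound+4$ active colours, its incoming/outgoing $\gonext$-status together with how the so-far-built $\gonext$-segments are anchored at the visible colours, so that exactly one chain survives at the root. For $\mathcal{A}_{\mathsf{pop}}$ the state records, for each active colour, its current $\NestRelOne$ out-degree capped at $\bound+1$; each $\add{i}{j}{\NestRelOne}$ increments the count of colour $i$, and the run rejects once a count exceeds $\bound$. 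Because the \STT algebra only adds edges between currently coloured vertices, the out-degree of a vertex is frozen at the $\forget{i}$ that hides it, so this purely local counting is correct, and the state space is $(\bound+2)^{O(\bound)}=2^{\poly(\bound)}$.

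The genuinely delicate factor is \Anesting, and this is where I expect the main obstacle. Rather than reasoning directly about the order-2 stack discipline, I would reuse the \PDL characterisation: a \twonw has well-nested matching relations iff it satisfies the fixed sentence $\phi_\mathsf{wn}$ of Example~\ref{ex:matchrel-consitent}, whose correctness rests on Proposition~\ref{prop:mypush}. It then suffices to compile $\phi_\mathsf{wn}$ (together with the elementary validity conditions of Example~\ref{ex:mso-language-hopds}) into a tree automaton over $(4\bound+4)$-\STTs. The compilation exploits that at any subterm only the $\le 4\bound+4$ coloured vertices are visible from the rest of the graph: the automaton's state stores, for each path subexpression $\pi$ of $\phi_\mathsf{wn}$, the binary relation that $\pi$ induces on the coloured vertices of the subgraph denoted so far, and, for each node subformula, the set of coloured vertices satisfying it. These data are updated compositionally through $\add{i}{j}{\gamma}$, $\forget{i}$, $\rename{i}{j}$ and $\sttunion$; the only subtle step is that a $\forget{i}$ can create paths routed through the now-hidden vertex, which forces the relations to be maintained as transitive closures over the surviving colours. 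Since $\phi_\mathsf{wn}$ is fixed, the number of such relations and sets is constant and each ranges over $2^{\poly(\bound)}$ values, giving $|\Anesting|=2^{\poly(\bound)}$; soundness of \Anesting, hence the reverse inclusion, is precisely the correctness of this compilation combined with Proposition~\ref{prop:mypush}.
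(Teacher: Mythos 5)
Your proof is correct and follows the same overall architecture as the paper's: a product tree automaton whose completeness half is exactly Theorem~\ref{thm:swofpopbounded} combined with the split-tree-to-\STT translation given just before the statement, and whose soundness half reduces well-nestedness of $\NestRelOne$ and $\NestRelTwo$ to the \PDL sentence $\phi_\mathsf{wn}$ of Example~\ref{ex:matchrel-consitent}, hence to Proposition~\ref{prop:mypush}. You diverge in two implementation choices. First, the paper has no separate pop-counting automaton: its \Aword only admits transitions for atomic \STTs (which must denote atomic split-\twonws, i.e.\ a push together with all its pops, hence at most $\bound+1$ singleton factors) and for $\sttunion$, $\add{i}{j}{\gonext}$, $\forget{i}$, $\rename{i}{j}$, so every nesting edge is forced to originate in an atomic \STT and the pop bound comes for free; your degree-counting $\mathcal{A}_{\mathsf{pop}}$ is a sound alternative precisely for the reason you give, namely that edges can only be attached while the source is still coloured, so its $\NestRelOne$ out-degree is frozen at $\forget{i}$. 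Second, for \Anesting the paper builds $\poly(k)$-size walking automata for ${\gonext}$, $\NestRelOne$, $\NestRelTwo$ and their converses, assembles an alternating two-way tree automaton for $\phi_\mathsf{wn}$, and only then converts to a normal tree automaton of size $2^{\poly(k)}$; you instead compile bottom-up by summarizing, for each path subexpression, the relation it induces on the currently coloured vertices, saturating at each $\forget{i}$. Both yield the required size; your route handles $\existsloop{\pi}$ directly by inspecting the diagonal of the summarized relation, whereas the two-way route needs the bounded-intersection-width argument of \cite{Goeller2009} that the paper invokes for Proposition~\ref{prop:Apdl}. One small slip: the ``elementary validity conditions'' (disjointness of the matchings and compatibility with the linear order) that must accompany $\phi_\mathsf{wn}$ are those of the unlabelled example characterizing all \twonws in \MSONW, not of Example~\ref{ex:mso-language-hopds}.
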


The automaton \Apopb will accept precisely those \kSTTs arising from split-trees
as described above.  The construction of \Apopb is given below.  The main
difficulty is to check that the graph denoted by a special tree term denotes a
valid \twonw: ${<}={\gonext}^+$ should be a total order and the relations
$\NestRelOne$ and $\NestRelTwo$ should be well-nested.  Since we start from
nested words in $\pbnw$, we obtain split-trees of width at most $2\bound+2$ by
Theorem~\ref{thm:swofpopbounded}.  Notice that \Apopb needs not accept
\emph{all} \kSTTs denoting graphs that are nested words in $\pbnw$.

\begin{proof}
  First, we construct a tree automaton \Aword whose states constists of
  \begin{itemize}
    \item $n \leq 2 \beta +2$ : Number of factors in the split-\twonw.
    \item $c_\ell,c_r:[n] \to [k]$ : Functions describing colors of the left and right endpoints of 
    the factor.
  \end{itemize}
  To check more easily absence of $\gonext$-cycles, factors are numbered
  according to their guessed ordering in the final \twonw.  The transitions of
  \Aword ensure the following conditions:
  \begin{itemize}
    \item Atomic \STTs: the automaton checks that they denote atomic split 
    nested words in $\pbnw$. Then, the number $n$ of factors is at most $\beta +1$ and 
    $c_\ell,c_r$ are the identity maps $\mathsf{id}\colon [n]\to[k]$.
    
    \item Consider a subterm $\stt = \stt_1 \sttunion \stt_2$, we have $n = n_1 + n_2$. We guess how factors of $\stt_1$ and $\stt_2$ will be shuffled on each process and we inherit $c_\ell$ and $c_r$ accordingly.
    
    \item Consider a subterm $\stt = \add{i}{j}{\rightarrow}(\stt')$. 
    Let $(n',c'_\ell,c'_r)$ be the state at $\stt'$. We check that
    there are factors $x,y \in [n']$ such that $c'_r(x)=i$, $c'_\ell(y)=j$ and 
    $y=x+1$ (this checks that the guessed ordering of the factors is correct).
    The states at $\stt$ is easy to compute.
    \begin{itemize}
      \item $n=n'-1$
      \item $c_\ell(z) =c'_\ell(z)$ for $z \leq x$ and $c_\ell(z) = c'_\ell(z+1)$ for $z>x$ 
      \item $c_r(z) = c'_r(z)$ for $z < x$ and $c_r(z) = c'_r(z+1)$ for $z \geq x$ 
    \end{itemize}
    
    \item $\forget{i} \stt$: Check that
    $i\notin\mathsf{Im}(c_\ell)\cup\mathsf{Im}(c_r)$ is not in the image of the
    mappings $c_\ell$ and $c_r$.  We always keep the colors of the
    endpoints of the factors.
    
    \item $\rename{i}{j}$: Update $c_\ell$ and $c_r$ accordingly.
    
    \item Root: Check that n=1 (a single factor).
  \end{itemize}
  
  When an \STT $\tau$ is accepted by \Aword, then the relation $\gonext$ of the
  graph $G_\tau$ defines a total order on the vertices.  Hence, we have an
  underlying word with some nesting relations $\NestRelOne$ and $\NestRelTwo$.
  But we did not check that these relations are well-nested.  To check this
  property, we use the LCPDL formula $\phi_\mathsf{wn}$ of
  Example~\ref{ex:matchrel-consitent}.
  
  Consider an \STT $\stt$ accepted by $\mathcal{A}^{\beta}_{\ensuremath{word}}$,
  let $\sem{\stt} = (G_\tau,\chi_\tau)$ where
  $G_\tau=(V,\rightarrow,\NestRelOne,\NestRelTwo,\lambda)$.  we know that
  $(V,\rightarrow,\lambda)$ defines a word in $\Sigma^+$.  The graph $G_\tau$
  can be interpreted in $\stt$: we can build walking automata of size $\poly(k)$
  for $\rightarrow$, $\NestRelOne$, $\NestRelTwo$ and their converse.  Hence we
  can build an alternating two-way tree automaton of size $\poly(k)$ checking
  $\phi_\mathsf{wn}$.  We obtain an equivalent normal tree automaton $\Anesting$
  of size $2^{\poly(k)}$
  
  The final tree automaton is $\Apopb=\Aword\cap\Anesting$.
\end{proof}

\begin{restatable}{proposition}{rstpropAhopds}\label{prop:Ahopds}
  For each \twocpds $\hopds$ we can construct a tree automaton \Ahopds of
  size $2^{\poly(\bound,\sizeof{\hopds})}$ such that
  $\mathcal{L}(\Apopb\cap\Ahopds)=\{\tau\in\mathcal{L}(\Apopb)\mid 
  G_\tau\in\mathcal{L}(\hopds)\}$.
\end{restatable}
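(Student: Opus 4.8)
The plan is to let \Ahopds guess, at each vertex of the special tree term, the transition of $\hopds$ fired at the corresponding position, and then certify that this guess forms an accepting run. Since we only need correctness relative to $\mathcal{L}(\Apopb)$, the automaton may assume that its input already denotes a genuine \twonw in $\pbnw$ with well-nested relations (guaranteed by Proposition~\ref{prop:Apopb}); it only has to witness the existence of a run, and its behaviour on terms outside $\mathcal{L}(\Apopb)$ is irrelevant. Concretely, I would work over the extended vertex alphabet $\Sigma\times\TransitionSet$, let the automaton relabel each vertex nondeterministically with a transition, run all the checks below over this relabelling, and project the $\TransitionSet$-component away at the end (projection preserves the size bound). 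It then remains to build, over $(\Sigma\times\TransitionSet)$-terms, a bottom-up part checking the local and sequential structure of the run, and a two-way part checking the top-tests, and to take their product.

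For the bottom-up part I would extend the state of \Aword. On top of the number $n\le 2\bound+2$ of factors and the colourings $c_\ell,c_r$ of their endpoints, I keep, for each factor, the $\hopds$-state reached at its left end (before its first operation) and at its right end (after its last operation). At a leaf (an atomic $\mathsf{STT}$) these states are read off the guessed transitions, and I check that the operation carried by each guessed transition is compatible with the nesting role of its vertex: a source (resp.\ target) of $\NestRelOne$ must perform $\push$ (resp.\ $\pop$), a source (resp.\ target) of $\NestRelTwo$ must perform $\Push$ (resp.\ $\Pop$), and a vertex with no incident nesting edge must perform an internal operation or an unmatched push. Consistency is then enforced at the edge-adding operations: at $\add{i}{j}{\gonext}$, which glues the right end of a factor to the left end of the next, I require that the end-state stored at colour $i$ equals the start-state stored at colour $j$, so that the run is sequential along the word order; at $\add{i}{j}{\NestRelOne}$ and $\add{i}{j}{\NestRelTwo}$ I re-check that the two endpoints carry matching push/pop operations. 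At the root, where a single factor remains, I check that it starts in $\InitialState$ and ends in an accepting state. Interior states are verified and then forgotten with their colours, so this component has $2^{\poly(\bound,\sizeof{\hopds})}$ states.

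The main obstacle is the top-tests, for exactly the reason noted after Example~\ref{ex:mso-language-hopds}: to validate a transition that performs $\toptest(s)$ at a position $x$ one must know that the symbol currently on top was pushed by a $\pushs s$ transition, that is, that the transition guessed at $\mypush(x)$ pushes $s$. This is non-local, and I would resolve it exactly as well-nestedness is resolved in Proposition~\ref{prop:Apopb}: reuse the deterministic path formula $\pi_{\mypush}$ of Example~\ref{ex:matchrel-consitent} to walk from $x$ to $\mypush(x)$. Since $G_\tau$ is interpretable in $\tau$ and walking automata of size $\poly(k)$ (with $k=4\bound+4$) exist for $\gonext$, $\NestRelOne$, $\NestRelTwo$ and their converses, I can build an alternating two-way tree automaton of size $\poly(k,\sizeof{\hopds})$ which, at every vertex carrying a top-test $\toptest(s)$, runs $\pi_{\mypush}$ and checks that the transition read at the target pushes $s$ (emptiness tests $\toptest(\bot)$ are treated analogously). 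Converting this two-way automaton into an ordinary bottom-up tree automaton costs a single exponential, giving size $2^{\poly(k,\sizeof{\hopds})}=2^{\poly(\bound,\sizeof{\hopds})}$.

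Finally I would take the product of the two components over $(\Sigma\times\TransitionSet)$-terms and project the transition component away, obtaining \Ahopds of size $2^{\poly(\bound,\sizeof{\hopds})}$. Correctness is then routine and rests on Proposition~\ref{prop:mypush}: an accepting run of \Ahopds on a term $\tau\in\mathcal{L}(\Apopb)$ exhibits a transition labelling whose operations are compatible with the nesting relations of $G_\tau$, are sequential along the word, pass all top-tests, and are initial and accepting; by Proposition~\ref{prop:mypush} this labelling is precisely an accepting run of $\hopds$ over the \twonw $G_\tau$, and conversely every accepting run of $\hopds$ on $G_\tau$ arises as such a guess.
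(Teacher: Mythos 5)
Your proposal follows essentially the same route as the paper's proof: guess transitions at the leaves of the \STT and project them away at the end, use a bottom-up automaton that extends \Aword by recording the source and target control states of each factor and checks consistency at $\add{i}{j}{\gonext}$ and at the root, and handle top-tests with an alternating two-way walking automaton following $\pi_{\mypush}$ from Example~\ref{ex:matchrel-consitent}, with a single exponential blow-up when converting to an ordinary tree automaton. The argument is correct and matches the paper's construction in all essentials.
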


\newcommand{\snwleaf}{\raisebox{-.5mm}{\includegraphics[scale=1,page=24]{tikz-pics}}}

\begin{proof}
  The tree automaton \Ahopds essentially guesses the transitions of the \twocpds
  and checks that they form an accepting run.  To this end, we first construct a
  tree automaton reading \STTs whose leaves are additionally labelled with
  transitions of $\hopds$.  Then, we project away these additional labels to
  obtain the automaton \Ahopds.
  
  Consider an atomic \STT describing a \twonw of the form \snwleaf.  The tree
  automaton checks that the transitions labelling the leaves are of the form
  $(q_1^a,a,\pushs\Label,q_2^a)$, $(q_1^b,b,\pop,q_2^b)$, $(q_1^c,c,\pop,q_2^c)$
  and $(q_1^d,d,\pop,q_2^d)$.
  
  Another interesting case is for internal events carrying top tests.  We
  construct an alternating two-way walking tree automaton (A2A) of size 
  $\poly(\bound,\sizeof{\hopds})$,
  which visits all leaves of the input \STT. For each leaf $x$ which is labelled
  with a top test transition $(q_1,a,\toptest(\Label),q_2)$, the A2A walks to
  the leaf $y$ following the \PDL formula $\pi_{\mypush}$ of
  Example~\ref{ex:matchrel-consitent}. The top symbol of the order-2 stack at 
  node $x$ was pushed at node $y$. Hence, the A2A checks that the transition
  labelling $y$ is of the form $(q'_1,b,\pushs\Label,q'_2)$.
  This A2A is transformed into a classical tree automaton of size 
  $2^{\poly(\bound,\sizeof{\hopds})}$.
  
  It remains to check that, when following the linear order $\gonext$, the
  transitions labelling the leaves form an accepting run.  To this end, the
  bottom-up tree automaton remembers; for every factor of the split-\twonw
  associated with a node of the \STT, the starting control state and the ending
  control state.  It is then easy to verify when adding a $\gonext$ edge between
  two factors $u$ and $v$ that the target state of factor $u$ is the source
  state of factor $v$.  Finally, at the root of the \STT, the automaton accepts
  if there is only one factor and its source/target state is initial/final.
  
  The size of the tree automaton \Ahopds is $2^{\poly(\bound,\sizeof{\hopds})}$.
\end{proof}

We deduce from Proposition~\ref{prop:Ahopds} that non-emptiness checking of
\twocpds with respect to $\pbnw$ is in \textsc{ExpTime}.

\begin{restatable}{proposition}{rstpropApdl}\label{prop:Apdl}
  For each \PDL formula $\phi$ we can construct a tree automaton \Aphi of
  size $2^{\poly(\bound,\sizeof{\phi})}$ such that
  $\mathcal{L}(\Apopb\cap\Aphi)=\{\tau\in\mathcal{L}(\Apopb)\mid 
  G_\tau\in\mathcal{L}(\phi)\}$.
\end{restatable}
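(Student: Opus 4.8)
The plan is to build $\Aphi$ exactly as the automaton $\Anesting$ for the well-nestedness formula $\phi_{\mathsf{wn}}$ was built in the proof of Proposition~\ref{prop:Apopb}, but now for an arbitrary \PDL formula $\phi$ in place of the fixed formula $\phi_{\mathsf{wn}}$. Recall the key interpretation fact established there: on a \kSTT $\tau$ (with $k=4\bound+4$) the vertices of the denoted graph $G_\tau$ are precisely the leaves of $\tau$, and each atomic edge relation $\gonext$, $\goprev$, $\gopopOne$, $\gopushOne$, $\gopopTwo$, $\gopushTwo$ of $G_\tau$ is recognised by a two-way walking tree automaton of size $\poly(k)$ that traces the edge by following the colours through $\tau$. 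Thus $G_\tau$ is uniformly interpretable in $\tau$, and evaluating $\phi$ on $G_\tau$ can be turned into navigating $\tau$.

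First I would realise each \PDL path formula $\pi$ by a two-way walking tree automaton $\mathcal{A}_\pi$ which, started at the leaf representing a vertex $x$, nondeterministically follows a $\pi$-labelled path in $G_\tau$ and halts at the leaf representing the target $y$. Since $\pi$ is a regular expression over the six atomic moves and tests $\test{\varphi}$, this is obtained by the standard closure of the atomic walking automata under concatenation, union and Kleene star; at each test $\test{\varphi}$ the device spawns, by alternation, a call to the node-automaton for $\varphi$ at the current leaf. Node formulas $\varphi$ are then treated by structural induction: the atom $a$ checks the label of the current leaf, boolean connectives use alternation and complementation, $\existspath{\pi}\varphi$ runs $\mathcal{A}_\pi$ existentially and checks $\varphi$ at the reached leaf, and $\existsloop{\pi}$ runs $\mathcal{A}_\pi$ and verifies that the navigated path returns to its starting vertex, exactly as was done for the loops occurring inside $\phi_{\mathsf{wn}}$. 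Combining all subformulas yields a single alternating two-way tree automaton of size $\poly(k,\sizeof{\phi})$ that evaluates $\phi$ at a designated leaf; for a \PDL sentence, which is a boolean combination of atoms $\existsnode{\varphi}$, the automaton additionally guesses a witnessing leaf for each $\existsnode{\varphi}$ and combines the verdicts according to the boolean structure.

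As in the $\Anesting$ step, I would then translate this alternating two-way tree automaton into an equivalent nondeterministic bottom-up tree automaton, which is the sole source of the single exponential blow-up, obtaining $\Aphi$ of size $2^{\poly(k,\sizeof{\phi})}=2^{\poly(\bound,\sizeof{\phi})}$ since $k=4\bound+4$. Intersecting with $\Apopb$ restricts attention to \STTs that genuinely denote nested words in $\pbnw$, on which the interpretation of $G_\tau$ and hence the semantics of $\phi$ are faithful; this yields the stated equality $\mathcal{L}(\Apopb\cap\Aphi)=\{\tau\in\mathcal{L}(\Apopb)\mid G_\tau\in\mathcal{L}(\phi)\}$. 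Emptiness of the product (together with the automata of Propositions~\ref{prop:Ahopds} and~\ref{prop:Apopb}) is then what gives the \textsc{ExpTime} bounds for $\SAT(\PDL,\beta)$ and $\MC(\PDL,\beta)$.

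The only genuinely delicate points, both already handled for $\phi_{\mathsf{wn}}$, are the faithful simulation of the Kleene star and of the loop modality $\existsloop{\pi}$ by a two-way device --- the latter requiring the automaton to certify that a navigated path comes back to its exact origin --- together with the correctness of the exponential simulation of alternating two-way tree automata by nondeterministic ones. I expect the actual write-up to be short: it amounts to spelling out the inductive translation of path and node formulas into walking automata and invoking this simulation. No idea beyond the $\phi_{\mathsf{wn}}$ construction is needed; the novelty is only that the same machinery applies uniformly to every \PDL formula $\phi$ rather than to one fixed formula.
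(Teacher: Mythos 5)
Your overall route is the paper's: interpret $G_\tau$ inside the \STT via polynomial-size walking automata for the atomic edge relations, compile $\phi$ by structural induction into an alternating two-way tree automaton of size $\poly(\bound,\sizeof{\phi})$, convert to a nondeterministic tree automaton at the cost of a single exponential, and intersect with \Apopb so that the interpretation is only required to be faithful on terms that genuinely denote nested words in $\pbnw$. All of that matches.

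The one genuine gap is your treatment of $\existsloop{\pi}$. You say the automaton ``runs $\mathcal{A}_\pi$ and verifies that the navigated path returns to its starting vertex,'' but a finite-state two-way walking device cannot do this as stated: it has no means of marking or remembering the leaf it started from in an unbounded tree, so after following a $\pi$-path it cannot decide whether it has come back to the exact origin. This is precisely the point the paper isolates (``one main difficulty is to cope with loops of \PDL'') and resolves by invoking the construction of G\"oller, Lohrey and Lutz \cite{Goeller2009} for PDL with converse and \emph{intersection}: the loop modality is handled as an intersection with the identity, which in general incurs an exponential blow-up in the intersection-width, but loop has bounded intersection-width and therefore still admits a polynomial-size A2A. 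Deferring to ``the same thing as for $\phi_{\mathsf{wn}}$'' does not discharge this, because the paper's construction of \Anesting rests on the same citation ($\phi_{\mathsf{wn}}$ itself contains loop subformulas). Without this ingredient your claimed bound $2^{\poly(\bound,\sizeof{\phi})}$ for \Aphi, and hence the \textsc{ExpTime} upper bounds, are not justified for formulas containing $\existsloop{\pi}$.
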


\begin{proof}
  The idea is to translate the \PDL formula $\phi$ to an alternating two-way tree
  automaton (A2A) of size $\poly(\bound,\sizeof{\phi})$.  Due to the specific form of the
  \STTs accepted by \Apopb, it is easy to encode the nesting relations
  $\NestRelOne$ and $\NestRelTwo$ with a walking automaton.  We can also easily
  build a walking automaton for the successor relation $\gonext$ by tracking the
  colors until we reach a node labelled $\add{i}{j}{\gonext}$.  One main
  difficulty is to cope with loops of \PDL.
  Here we use the result of \cite{Goeller2009} for PDL with converse and
  intersection.  In general, this can cause an exponential blow-up in the size of
  the A2A. But loop is a special case with bounded intersection-width and hence
  still allows a polynomial sized A2A.
  Finally, the A2A for $\phi$ is translated to the normal tree automaton \Aphi, 
  causing an exponential blow-up.
\end{proof}
  
We deduce that the bounded-pop satisfiability problem of \PDL can be solved in 
exponential time by checking emptiness of $\Apopb\cap\Aphi$.
Also, the bounded-pop model checking problem of \twocpds against \PDL can be
solved in exponential time by checking emptiness of 
$\Apopb\cap\Ahopds\cap\Anotphi$.

\medskip%
Similarly, for each \MSONW formula $\phi$, we can construct a tree automaton
\Aphi such that $\mathcal{L}(\Apopb\cap\Aphi)=\{\tau\in\mathcal{L}(\Apopb)\mid
G_\tau\in\mathcal{L}(\phi)\}$.  We deduce the decidability of the bounded-pop
satisfiability problem of \MSONW and of the bounded-pop model checking problem
of \twocpds against \MSONW. This concludes the proof of Theorem~\ref{thm:main}.

\end{document}